\documentclass[12pt]{article}
\usepackage[latin1]{inputenc}
\usepackage{geometry}
\usepackage{amsfonts}
\usepackage{amsmath}
\usepackage{amssymb}
\usepackage{amsthm}
\usepackage{graphics}

\usepackage{hyperref}

\newcommand{\eps}{\varepsilon}
\newcommand{\Q}{\mathbb{Q}}
\newcommand{\QQ}{\tilde{\mathbb{Q}}}

\newcommand{\F}{\mathbb{F}}
\newcommand{\R}{\mathbb{R}}

\newcommand{\cut}{\cap}
\newcommand{\uni}{\cup}

\newcommand{\prc}[1]{\ensuremath{\mathsf{#1}}}

\newcommand{\RP}{\prc{RP}}
\newcommand{\NP}{\prc{NP}}
\newcommand{\FPRAS}{FPRAS}

\newcommand{\SP}{\prc{\#P}}

\newcommand{\poly}{\mathsf{poly}}

\newcommand{\rk}{rk}

\newcommand{\gv}[1]{\mathbf{#1}}  
\newcommand{\vc}[1]{\mathbf{#1}}  
\newcommand{\Ga}{G_{aa}}
\newcommand{\ak}{a_1, \ldots, a_k}
\newcommand{\sak}{\{ \ak \} }
\newcommand{\red}{\preceq_T}
\newcommand{\redm}{\preceq^\mathtt{m}}

\newcommand{\aA}{\mathcal{A}}
\newcommand{\II}{\tilde{I}}
\newcommand{\cc}{\tilde{c}}
\newcommand{\NN}{\tilde{N}}

\newcommand{\subs}{\subseteq}
\newcommand{\y}{\upsilon}
\newcommand{\x}{\xi}
\renewcommand{\u}{\mu}

\newtheorem{defi}{Definition}[section]
\newtheorem{cor}[defi]{Corollary}
\newtheorem{lem}[defi]{Lemma}
\newtheorem{thm}[defi]{Theorem}
\newtheorem{rem}[defi]{Remark}
\newtheorem{prop}[defi]{Proposition}

\numberwithin{equation}{section}

\begin{document}

\title{On the Complexity of the Interlace Polynomial%
  \footnote{A preliminary version of this work has appeared in the
    proceedings of STACS 2008.}}

\author{Markus Bl\"aser, Christian Hoffmann}

\maketitle

\begin{abstract}
  We consider the two-variable interlace polynomial introduced by
  Arratia, Bollobás and Sorkin (2004). We develop graph
  transformations which allow us to derive point-to-point reductions
  for the interlace polynomial.  Exploiting these reductions we obtain
  new results concerning the computational complexity of evaluating
  the interlace polynomial at a fixed point. Regarding \emph{exact}
  evaluation, we prove that the interlace polynomial is \SP-hard to
  evaluate at every point of the plane, except on one line, where it
  is trivially polynomial time computable, and four lines, where the
  complexity is still open. This solves a problem posed by Arratia,
  Bollobás and Sorkin (2004). In particular, three specializations of
  the two-variable interlace polynomial, the vertex-nullity interlace
  polynomial, the vertex-rank interlace polynomial and the independent
  set polynomial, are almost everywhere \SP-hard to evaluate, too. For
  the independent set polynomial, our reductions allow us to prove
  that it is even hard to \emph{approximate} at every point except
  at~$0$.
\end{abstract}

\section{Introduction}

The number of Euler circuits in specific graphs and their interlacings
turned out to be a central issue in the solution of a problem related
to DNA sequencing by hybridization \cite{dna_interlacings}. This led
to the definition of a new graph polynomial, the one-variable
interlace polynomial \cite{interlace_polynomial}. Further research on
this polynomial inspired the definition of a two-variable interlace
polynomial $q(G;x,y)$ containing as special cases the following graph
polynomials: $q_N(G;y)=q(G;2,y)$ is the original one-variable
interlace polynomial which was renamed to ``vertex-nullity interlace
polynomial'', $q_R(G;x)=q(G;x,2)$ is the new ``vertex-rank interlace
polynomial'' and $I(G;x)=q(G;1,1+x)$ is the independent set
polynomial\footnote{The independent set polynomial of a graph $G$ is
  defined as $I(G;x)=\sum_{j\geq 0} i(G;j)x^j$, where $i(G;j)$ denotes
  the number of independent sets of cardinality $j$ of $G$.}
\cite{arratia_two_var_interl}.

Although the interlace polynomial $q(G;x,y)$ is a different object
from the celebrated Tutte polynomial (also known as dichromatic
polynomial, see, for instance, \cite{tutte_graph_theory}), they are
also similar to each other.  While the Tutte polynomial can be defined
recursively by a deletion-contraction identity on edges, the interlace
polynomial satisfies recurrence relations involving several operations
on vertices (deletion, pivotization, complementation).

Besides the deletion-contraction identity, the so called state
expansion is a well-known way to define the Tutte polynomial. Here the
similarity to the two-variable interlace polynomial is especially
striking: while the interlace polynomial is defined as a sum over all
vertex subsets of the graph using the rank of adjacency matrices (see
\eqref{eq:q_def}), the state expansion of the Tutte polynomial can be
interpreted as a sum over all edge subsets of the graph using the rank
of incidence matrices (see \eqref{eq:t_def})
\cite[Section~1]{arratia_two_var_interl}.

References to further work on the interlace polynomial can be found in
\cite{arratia_two_var_interl} and \cite{ems_distance_hereditary}. 

\subsection{Previous work}

The aim of this paper is to explore the computational complexity of
evaluating\footnote{See Section \ref{ssec:graph_polynomials} for a
  precise definition.} the two-variable interlace polynomial
$q(G;x,y)$. For the Tutte polynomial this problem was solved in
\cite{jaeger_vertigan_welsh}: Evaluating the Tutte polynomial is
\SP-hard at any algebraical point of the plane, except on the
hyperbola $(x-1)(y-1)=1$ and at a few special points, where the Tutte
polynomial can be evaluated in polynomial time. For the two-variable
interlace polynomial $q(G;x,y)$, only on a one-dimensional subset of
the plane (on the lines $x=2$ and $x=1$) some results about the
evaluation complexity are known.

A connection between the vertex-nullity interlace polynomial and the
Tutte polynomial of planar graphs \cite[End of Section
7]{interlace_polynomial}, \cite[Theorem 3.1]{ems_distance_hereditary} shows
that evaluating $q$ is \SP-hard almost everywhere on the line $x=2$
(Corollary~\ref{cor:qN_complexity}).

It has also been noticed that $q(G;1,2)$ evaluates to the number of
independent sets of $G$ \cite[Section 5]{arratia_two_var_interl},
which is \SP-hard to compute \cite{counting_IS_hard}. Recent work on
the matching generating polynomial \cite{averbouch_makowsky} implies
that evaluating $q$ is \SP-hard almost everywhere on the line $x=1$
(Corollary~\ref{cor:I_complexity}).

A key ingredient of \cite{jaeger_vertigan_welsh} is to apply graph
transformations known as stretching and thickening of edges. For the
Tutte polynomial, these graph transformations allow us to reduce the
evaluation at one point to the evaluation at another point. For the
interlace polynomial no such graph transformations have been given so
far.

\subsection{Our results}

We develop three graph transformations which are useful for the
interlace polynomial: cloning of vertices and adding combs or cycles
to the vertices. Applying these transformations allows us to reduce
the evaluation of the interlace polynomial at some point to the
evaluation of it at another point, see Theorem~\ref{thm:p2p},
Theorem~\ref{thm:p2p_comb} and Theorem~\ref{thm:p2p_cycles}. We
exploit this to obtain the following new results about the
computational complexity of $q(G;x,y)$.

We prove that the two-variable interlace polynomial $q(G;x,y)$ is
\SP-hard to evaluate at almost every point of the plane,
Theorem~\ref{thm:q_summary}, see also Figure~\ref{fig:pqcomplexity}.
Even though there are some unknown (gray, in
Figure~\ref{fig:pqcomplexity}) lines left on the complexity map for
$q$, this solves a challenge posed in \cite[Section
5]{arratia_two_var_interl}. In particular we obtain the new result
that evaluating the vertex-rank interlace polynomial $q_R(G;x)$ is
\SP-hard at almost every point (Corollary~\ref{cor:qR_complexity}).
Our techniques also give a new proof that the independent set
polynomial is \SP-hard to evaluate almost everywhere
(Corollary~\ref{cor:I_complexity}).

Apart from these results on the computational complexity of evaluating
the interlace polynomial \emph{exactly}, we also show that the values
of the independent set polynomial (which is the interlace polynomial
$q(G;x,y)$ on the line $x=1$) are hard to \emph{approximate} almost
everywhere (Theorem~\ref{thm:I_inapprox}).

\section{Preliminaries}
\label{sec:prelim}

\subsection{Interlace Polynomials}
\label{sec:interlace_polynomials}

We consider undirected graphs without multiple edges but with self
loops allowed. Let $G=(V,E)$ be such a graph and $A\subs V$. By $G[A]$
we denote $(A, \{e | e\in E, e \subs A \})$, the subgraph of $G$
induced by $A$. The adjacency matrix of $G$ is the symmetric $n\times
n$-matrix $M=(m_{ij})$ over $\F_2=\{0,1\}$ with $m_{i,j} = 1$ iff
$\{i,j\} \in E$. The rank of this matrix is its rank over $\F_2$.
Slightly abusing notation we write $\rk(G)$ for this rank. This allows
us to define the two-variable interlace polynomial.

\begin{defi}[{\cite{arratia_two_var_interl}}]
\label{def:q}
Let $G=(V,E)$ be an undirected graph. The interlace polynomial
$q(G;x,y)$ of $G$ is defined as
\begin{equation}
\label{eq:q_def}
q(G;x,y) = \sum_{A\subs V} (x-1)^{\rk(G[A])}
  (y-1)^{|A|-\rk(G[A])}.
\end{equation}
\end{defi}

In Section~\ref{sec:graph_trans} we will introduce graph
transformations which perform one and the same operation (cloning one
single vertex, adding a comb or a cycle to one single vertex, resp.)\
on every vertex of a graph.  Instead of relating the interlace
polynomial of the original graph directly to the interlace polynomial
of the transformed graph, we will analyze how, say, cloning \emph{one
  single vertex} changes the interlace polynomial. To express this, we
must be able to treat the vertex being cloned in a particular way,
differently from the other vertices. This becomes possible using a
\emph{multivariate} version of the interlace polynomial, in which each
vertex has its own variable.  Once we can express the effect of
cloning \emph{one} vertex by an appropriate substitution of the vertex
variable in the multivariate interlace polynomial, cloning \emph{all}
the vertices amounts to a simple substitution of all vertex variables
and brings us back to a bivariate interlace polynomial. This procedure
has been applied successfully to the Tutte polynomial
\cite{sokal-2005, blaes_mak}.

We choose the following multivariate interlace polynomial, which is
similar to the multivariate Tutte polynomial of Sokal
\cite{sokal-2005} and a specialization of the multivariate interlace
polynomial defined by Courcelle \cite{courcelle_interl}.
\begin{defi}
  Let $G=(V,E)$ be an undirected graph. For each $v\in V$ let $x_v$ be
  an indeterminate. Writing $x_A$ for $\prod_{v\in A}x_v$, we define
  the following multivariate interlace polynomial:
  \[ P(G;u,\gv{x}) = \sum_{A\subs V} x_A u^{\rk(G[A])}. \]
Substituting each $x_v$ in $P(G;u,\gv{x})$ by $x$, we obtain another
bivariate interlace polynomial:
\[ P(G;u,x) = \sum_{A\subs V} x^{|A|} u^{\rk(G[A])}. \]
\end{defi}
An easy calculation proves that $q$ and $P$ are closely related:
\begin{lem}
\label{lem:p_q}
Let $G$ be a graph. Then we have the polynomial identities
$q(G;x,y)=P(G;\frac{x-1}{y-1},y-1)$ and $P(G;u,x)=q(G;ux+1,x+1)$. \qed
\end{lem}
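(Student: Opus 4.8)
The plan is to verify the two claimed polynomial identities by direct substitution, using nothing more than the definitions of $q$ and $P$ together with the fact that both are sums over the same index set $\{A : A \subs V\}$ of monomials in the relevant parameters. Since the two identities are inverse substitutions of one another, it suffices to establish one of them carefully and then observe that the other follows by an algebraic inversion; I would prove the second, $P(G;u,x)=q(G;ux+1,x+1)$, since the substitution $x \mapsto x+1$, $y \mapsto ux+1$ is slightly cleaner to carry out.

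\medskip

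\noindent First I would write out $q(G;ux+1,x+1)$ by plugging $x \leftarrow ux+1$ and $y \leftarrow x+1$ into \eqref{eq:q_def}. In each summand the factor $(x-1)^{\rk(G[A])}$ becomes $((ux+1)-1)^{\rk(G[A])} = (ux)^{\rk(G[A])}$, and the factor $(y-1)^{|A|-\rk(G[A])}$ becomes $((x+1)-1)^{|A|-\rk(G[A])} = x^{|A|-\rk(G[A])}$. Multiplying these, the powers of $x$ combine to $x^{\rk(G[A])} \cdot x^{|A|-\rk(G[A])} = x^{|A|}$, leaving a factor $u^{\rk(G[A])}$, so the summand equals $x^{|A|} u^{\rk(G[A])}$. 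Summing over all $A \subs V$ gives exactly the definition of $P(G;u,x)$, which establishes the second identity.

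\medskip

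\noindent For the first identity, $q(G;x,y)=P(G;\frac{x-1}{y-1},y-1)$, I would run the same computation in reverse: substituting $u \leftarrow \frac{x-1}{y-1}$ and $x \leftarrow y-1$ into the \emph{multivariate-free} form $P(G;u,x)=\sum_{A} x^{|A|} u^{\rk(G[A])}$, each summand becomes $(y-1)^{|A|} \left(\frac{x-1}{y-1}\right)^{\rk(G[A])} = (y-1)^{|A|-\rk(G[A])} (x-1)^{\rk(G[A])}$, which is precisely the summand of \eqref{eq:q_def}; summing over $A$ recovers $q(G;x,y)$. Alternatively, and perhaps more cleanly, one checks that the two substitutions $(x,y)\mapsto(ux+1,x+1)$ and $(u,x)\mapsto(\tfrac{x-1}{y-1},y-1)$ are mutually inverse rational maps, so that the two identities are formally equivalent once either one is known; this sidesteps writing out the second computation explicitly. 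I do not anticipate any real obstacle here — the only point requiring a word of care is that these are identities of polynomials (in the first case, after clearing the apparent denominator $y-1$, which divides evenly term by term since $\rk(G[A]) \le |A|$), so one should note that the manipulations are valid in the polynomial ring and not merely as rational functions.
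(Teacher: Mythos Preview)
Your proposal is correct and is exactly the ``easy calculation'' the paper alludes to before stating the lemma; the paper gives no further proof beyond the \qed. Your remark that $\rk(G[A])\le |A|$ ensures the first identity holds in the polynomial ring (not merely as rational functions) is a nice point the paper leaves implicit.
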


\subsection{Evaluating Graph Polynomials}
\label{ssec:graph_polynomials}

Given $\x, \y \in \Q$ we want to analyze the following computational
problem:
\begin{description}
\item[Input] Graph $G$
\item[Output] $q(G;\x,\y)$
\end{description}
This is what we mean by ``evaluating the interlace polynomial $q$ at
the point $(\x,\y)$''. As an abbreviation for this computational problem we write
\[ q(\x,\y), \] which should not be confused with the expression
$q(G;\x,\y)$ denoting just a value in $\Q$. Evaluating other graph
polynomials such as $P$, $q_N$, $q_R$ and $I$ is defined accordingly.

If $P_1$ and $P_2$ are computational problems we use $P_1 \red P_2$
($P_1 \redm P_2$) to denote a polynomial time Turing reduction
(polynomial time many-one reduction, resp.) from $P_1$ to $P_2$. For
instance, Lemma~\ref{lem:p_q} gives
\begin{cor}
  For $\x,\y\in \QQ$, $\y\neq 1$, we have $q(\x,\y)\redm
  P(\frac{\x-1}{\y-1},\y-1)$. For $\u,\x\in \QQ$ we have $P(\u,\x)\redm
  q(\u\x+1,\x+1)$. \qed
\end{cor}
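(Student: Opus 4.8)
The final statement is the corollary asserting that for $\x,\y\in\QQ$ with $\y\neq 1$ we have $q(\x,\y)\redm P(\frac{\x-1}{\y-1},\y-1)$, and for $\u,\x\in\QQ$ we have $P(\u,\x)\redm q(\u\x+1,\x+1)$.

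**The approach.** The plan is to read off both reductions directly from the polynomial identities in Lemma~\ref{lem:p_q}, which already give $q(G;x,y)=P(G;\frac{x-1}{y-1},y-1)$ and $P(G;u,x)=q(G;ux+1,x+1)$ as identities of polynomials (equivalently, of rational functions in the first case, valid whenever $y\neq1$). The key point is that these are not merely equalities of numbers but hold for \emph{every} graph $G$ simultaneously, so they transform an instance of one evaluation problem into an instance of the other with the \emph{same} graph and nothing else changed. Since the target point is obtained from the source point by a fixed rational function (independent of $G$), and evaluating that rational function at a fixed rational input is a constant-time preprocessing step, the whole reduction is: on input $G$, output $G$ unchanged, and interpret the answer as the answer to the original problem. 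That is exactly a polynomial-time many-one reduction.

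**Key steps, in order.** First, fix $\x,\y\in\QQ$ with $\y\neq1$ and set $\u_0=\frac{\x-1}{\y-1}$, $\x_0=\y-1$; note these are well-defined elements of $\QQ$. Second, invoke Lemma~\ref{lem:p_q} specialized at $x=\x$, $y=\y$ to get $q(G;\x,\y)=P(G;\u_0,\x_0)$ for every graph $G$; since $\y\neq1$ the denominator in the identity does not vanish, so the specialization is legitimate. Third, observe that the map $G\mapsto G$ is a trivial polynomial-time reduction from the problem $q(\x,\y)$ to the problem $P(\u_0,\x_0)$: any oracle/algorithm returning $P(G;\u_0,\x_0)$ returns exactly $q(G;\x,\y)$. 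Fourth, repeat symmetrically for the second assertion: given $\u,\x\in\QQ$, set $\x_1=\u\x+1$, $\y_1=\x+1$, apply Lemma~\ref{lem:p_q} in the form $P(G;\u,\x)=q(G;\x_1,\y_1)$, and use the identity-reduction $G\mapsto G$ from $P(\u,\x)$ to $q(\x_1,\y_1)$. Here no side condition is needed since the identity $P(G;u,x)=q(G;ux+1,x+1)$ holds unconditionally.

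**Main obstacle.** There is essentially no obstacle; the only thing to be careful about is that Lemma~\ref{lem:p_q} gives a \emph{polynomial identity} rather than merely numerical equality at special points, so that substituting concrete rationals $\x,\y$ (resp.\ $\u,\x$) is valid — in particular that the rational expression $\frac{\x-1}{\y-1}$ is legitimately substituted only because we assume $\y\neq1$, which is why that hypothesis appears in the first part but not the second. Beyond this, one should just note that evaluating the fixed rational functions defining the new point costs $O(1)$ time and produces a point again in $\QQ$, so the reduction genuinely runs in polynomial (indeed constant) time and the many-one form $\redm$ is justified rather than only the Turing form $\red$.
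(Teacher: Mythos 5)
Your proof is correct and takes the same (only natural) route the paper has in mind: the corollary is the direct specialization of the polynomial identities in Lemma~\ref{lem:p_q}, with the many-one reduction being the identity map on graphs and the hypothesis $\y\neq 1$ needed exactly to make the substituted point well defined. The paper leaves this implicit (the corollary is stated with an immediate \qed), and your write-up just makes that explicit.
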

Here $\QQ$ denotes some finite dimensional field extension $\Q\subs
\QQ \subs \R$, which has a discrete representation. 
As $\sqrt{2}$ will play an important role but we are
not able to use arbitrary real numbers as the input for a Turing
machine, we use $\QQ$ instead of $\Q$ or $\R$.  We fix some $\QQ$ for
the rest of this paper.
This construction is done in the spirit of Jaeger, Vertigan, and Welsh
\cite{jaeger_vertigan_welsh} who also propose to adjoin
a finite number of points to $\Q$ in order to talk  about
the complexity at irrational points. To some extent, this
is an ad hoc construction, but it is sufficient for this work.

\section{Graph Transformations for the Interlace Polynomial}
\label{sec:graph_trans}

Now we describe our graph transformations, vertex cloning and adding
combs or cycles to the vertices. The main results of this section are
Theorem~\ref{thm:p2p}, Theorem~\ref{thm:p2p_comb} and
Theorem~\ref{thm:p2p_cycles}, which describe the effect of these graph
transformations on the interlace polynomial.

\subsection{Cloning}
\label{ssec:cloning}

Cloning vertices in the graph yields our first graph transformation.

\subsubsection*{Cloning one vertex}

Let $G=(V,E)$ be a graph. Let $a\in V$ be some vertex (the one which
will be cloned) and $N$ the set of neighbors of $a$, $V'=V\setminus
\{a\}$ and $M=V'\setminus N$.  The graph $G$ with $a$ cloned,
$G_{aa}$, is obtained out of $G$ in the following way: Insert a new
isolated vertex $a'$.  Connect $a'$ to all vertices in $N$. If $a$
does not have a self loop, we are done.  Otherwise connect $a$ and
$a'$ and insert a self loop at $a'$. Thus, adjacency matrices of
the original (cloned, resp.)  graph are
\begin{equation}
B=
\begin{array}{c|ccc}
  & a    & N & M \\ \hline
a & b    & \vc{1} & \vc{0} \\
N & \vc{1}    & A_{11} & A_{12} \\
M & \vc{0}    & A_{21} & A_{22}
\end{array}
\quad\quad \text{and} \quad \quad
B_{aa} = \begin{array}{c|cccc}
   & a' & a    & N & M \\ \hline
a' & b  & b    & \vc{1} & \vc{0} \\
a  & b  & b    & \vc{1} & \vc{0} \\
N  & \vc{1}  & \vc{1}    & A_{11} & A_{12} \\
M  & \vc{0}  & \vc{0}    & A_{21} & A_{22}
\end{array}
\quad, \quad \text{resp,}
\end{equation}
where $b=1$ if $a$ has a self loop and $b=0$ otherwise. As the first
column of $B_{aa}$ equals its second column, as well as the first row
equals the second row, we can remove the first row and the first
column of $B_{aa}$ without changing the rank. This also holds when we
consider the adjacency matrices of $G[A]$ ($G_{aa}[A]$, resp.) instead
of $G$ ($G_{aa}$ resp.) for $A\subs V'$. Thus we have for any $A\subs
V'$
\begin{align}
\label{eq:rank_aa}
\rk(G_{aa}[A]) & = \rk(G[A]),\\
\label{eq:rank_aa2}
\rk(G_{aa}[A\uni\{a,a'\}]) & = \rk(\Ga[A\uni\{a\}]) =
\rk(\Ga[A\uni\{a'\}]) = \rk(G[A\uni\{a\}]).
\end{align}

Let $\gv{x} = (x_v)_{v\in V(G_{aa})}$ be a labeling of the vertices of
$G_{aa}$ by indeterminates. Define $\gv{X}$ to denote the following
labeling of the vertices of $G$: $X_v := x_v$ for all $v\in V'$, $X_a
:= (1+x_a)(1+x_{a'})-1 = x_a+x_{a'}+x_a x_{a'}$. Then we have

\begin{lem}
\label{lem:dupli}
$P(G_{aa};u,\gv{x}) = P(G;u,\gv{X})$.
\end{lem}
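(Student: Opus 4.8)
The plan is to compare the two sums defining $P(G_{aa};u,\gv{x})$ and $P(G;u,\gv{X})$ term by term, organising the subsets of $V(G_{aa})$ according to their intersection with $\{a,a'\}$. Every $B\subs V(G_{aa})$ decomposes uniquely as $B = A \cup S$ with $A\subs V'$ and $S\subs\{a,a'\}$, so
\[
P(G_{aa};u,\gv{x}) = \sum_{A\subs V'}\; \sum_{S\subs\{a,a'\}} x_A\, x_S\, u^{\rk(G_{aa}[A\cup S])}.
\]
The key point is that the rank only ever takes one of two values on the inner sum: by \eqref{eq:rank_aa} it equals $\rk(G[A])$ when $S=\emptyset$, and by \eqref{eq:rank_aa2} it equals $\rk(G[A\cup\{a\}])$ for each of the three nonempty choices $S\in\{\{a\},\{a'\},\{a,a'\}\}$ — this is exactly what the row/column coincidence in $B_{aa}$ buys us. Hence the inner sum factors as
\[
x_A u^{\rk(G[A])} \;+\; x_A\bigl(x_a + x_{a'} + x_a x_{a'}\bigr) u^{\rk(G[A\cup\{a\}])}
\;=\; x_A u^{\rk(G[A])} \;+\; x_A X_a\, u^{\rk(G[A\cup\{a\}])},
\]
using the definition $X_a = x_a + x_{a'} + x_a x_{a'}$.

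Now I turn to the right-hand side. Splitting the sum defining $P(G;u,\gv{X})$ over $A'\subs V$ according to whether $a\in A'$ gives
\[
P(G;u,\gv{X}) = \sum_{A\subs V'} X_A\, u^{\rk(G[A])} \;+\; \sum_{A\subs V'} X_{A}\, X_a\, u^{\rk(G[A\cup\{a\}])}.
\]
Since $X_v = x_v$ for all $v\in V'$, we have $X_A = x_A$ for $A\subs V'$, so this is precisely
\[
\sum_{A\subs V'}\Bigl( x_A\, u^{\rk(G[A])} + x_A\, X_a\, u^{\rk(G[A\cup\{a\}])}\Bigr),
\]
which matches the grouped form of $P(G_{aa};u,\gv{x})$ obtained above. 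This proves the identity.

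The only subtlety — and the step worth stating carefully rather than the routine bookkeeping — is the case analysis behind \eqref{eq:rank_aa} and \eqref{eq:rank_aa2}: one must check that deleting the duplicated row and column of the adjacency matrix of $G_{aa}[A\cup S]$ is legitimate for \emph{every} $A\subs V'$ and every nonempty $S$, including the degenerate cases where $a$ carries a self loop (so $b=1$ and the $a,a'$ block is all-ones rather than all-zeros) and where $A$ contains none or only some of the neighbours $N$. But this has already been recorded in the excerpt preceding the lemma, so here it suffices to invoke \eqref{eq:rank_aa} and \eqref{eq:rank_aa2} and carry out the term grouping above; no further computation beyond expanding $(1+x_a)(1+x_{a'})-1$ is needed.
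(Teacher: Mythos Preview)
Your proof is correct and follows essentially the same route as the paper: both arguments split the sum for $P(G_{aa};u,\gv{x})$ according to the intersection of the subset with $\{a,a'\}$, invoke \eqref{eq:rank_aa} and \eqref{eq:rank_aa2} to collapse the three nonempty cases, and then match against the two-term expansion of $P(G;u,\gv{X})$ obtained by splitting on membership of $a$. The only difference is presentational.
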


\begin{proof}
On the one hand we have
\begin{align*}
  & P(G_{aa};u,\gv{x}) \\ = & \sum_{A\subs V'} x_A (u^{\rk(\Ga[A])}
  + x_au^{\rk(\Ga[A\uni\{a\}])} + x_{a'}u^{\rk(\Ga[A\uni\{a'\}])} +
  x_a x_{a'}u^{\rk(\Ga[A\uni\{a, a'\}])}) \\
  =& \sum_{A\subs V'} x_A (u^{\rk(G[A])} + (x_a+x_{a'}+x_a
  x_{a'})u^{\rk(G[A\uni\{a\}])})\ \text{by \eqref{eq:rank_aa},
    \eqref{eq:rank_aa2}}.
\end{align*}

On the other hand we have
\begin{align*}
  P(G;u,\gv{X}) = & \sum_{A\subs V'}X_A(u^{\rk(G[A])} + X_a u^{\rk(G[A\uni \{a\}])})  \\
  =& \sum_{A\subs V'}x_A(u^{\rk(G[A])} + (x_a+x_{a'}+x_a x_{a'})
  u^{\rk(G[A\uni \{a\}])}).
\end{align*}
\end{proof}

\subsubsection*{Cloning all vertices}

Fix some $k$. Given a graph $G$, the graph $G_k$ is obtained by
cloning each vertex of $G$ exactly $k-1$ times. Note that the result
of the cloning is independent of the order in which the different
vertices are cloned. For $a\in V(G)$ let $a_1, \ldots, a_k$ be the
corresponding vertices in $G_k$.  For a vertex labeling $\gv{x}$ of
$G_k$ we define the vertex labeling $\gv{X}$ of $G$ by $X_a =
(1+x_{a_1})(1+x_{a_2})\cdots (1+x_{a_k})-1$ for $a\in V(G)$.  Applying
Lemma~\ref{lem:dupli} repeatedly we obtain

\begin{lem}
$P(G_k;u,\gv{x})=P(G;u,\gv{X})$. \qed
\end{lem}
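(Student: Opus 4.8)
The plan is to prove the identity $P(G_k;u,\gv{x})=P(G;u,\gv{X})$ by induction on $k$, peeling off one clone at a time and invoking Lemma~\ref{lem:dupli} at each step. The base case $k=1$ is trivial, since $G_1=G$ and $X_a=(1+x_{a_1})-1=x_{a_1}$, so the labeling is unchanged and the identity reads $P(G;u,\gv{x})=P(G;u,\gv{x})$.

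For the inductive step, suppose the claim holds for $k-1$. Given $G$ and a labeling $\gv{x}$ of $G_k$, observe that $G_k$ can be obtained from $G_{k-1}$ by cloning, for each $a\in V(G)$, the vertex $a_{k-1}$ once, producing $a_k$ (this is legitimate because, as noted in the text, the order of cloning does not matter, so we may group the $k$-th clone of every vertex as the last batch). I would first apply Lemma~\ref{lem:dupli} once for each $a$ — again using order-independence of cloning single vertices in disjoint neighborhoods — to get $P(G_k;u,\gv{x})=P(G_{k-1};u,\gv{x}')$, where the labeling $\gv{x}'$ on $G_{k-1}$ agrees with $\gv{x}$ on all vertices except that $x'_{a_{k-1}}=(1+x_{a_{k-1}})(1+x_{a_k})-1$ for each $a$. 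Then I would apply the induction hypothesis to $G_{k-1}$ with the labeling $\gv{x}'$, obtaining $P(G_{k-1};u,\gv{x}')=P(G;u,\gv{X}')$ where $X'_a=\prod_{i=1}^{k-1}(1+x'_{a_i})-1$. Finally, a short computation collapses the telescoping product: since $x'_{a_i}=x_{a_i}$ for $i<k-1$ and $1+x'_{a_{k-1}}=(1+x_{a_{k-1}})(1+x_{a_k})$, we get $1+X'_a=\prod_{i=1}^{k-1}(1+x'_{a_i})=\prod_{i=1}^{k}(1+x_{a_i})=1+X_a$, so $\gv{X}'=\gv{X}$ and we are done.

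The only real subtlety — and the step I would be most careful about — is justifying that cloning single vertices commutes, so that $G_k$ really is the result of $k-1$ successive rounds of ``clone every vertex once'' and Lemma~\ref{lem:dupli} can be chained. This is essentially immediate from the explicit description of the cloning construction (a new clone $a'$ of $a$ is attached only to the neighbors of $a$, plus $a$ itself in the self-loop case, so clones of distinct original vertices and clones created in different rounds do not interfere with one another), but it should be stated rather than left silent, since the text only asserts it for a single round. Everything else is the routine bookkeeping of substituting one indeterminate for a product of indeterminates, which I would not spell out in full.
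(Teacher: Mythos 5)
Your proof is correct and takes essentially the same route as the paper, which simply asserts that the result follows by applying Lemma~\ref{lem:dupli} repeatedly; your induction on $k$ is just the natural way to make that iteration precise, and the order-independence of cloning that you flag as the main subtlety is already noted (without proof) in the text immediately preceding the lemma.
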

Substitution of $x_v$ by $x$ for all vertices $v$ gives

\begin{thm}
\label{thm:p2p}
Let $G$ be a graph and $G_k$ be obtained out of $G$ by cloning each
vertex of $G$ exactly $k-1$ times. Then
\begin{equation}
\label{eq:P_clone}
P(G_k;u,x)=P(G;u, (1+x)^k-1).
\end{equation}
\qed
\end{thm}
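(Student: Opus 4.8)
The plan is to obtain Theorem~\ref{thm:p2p} as a direct specialization of the preceding multivariate lemma. From the lemma $P(G_k;u,\gv{x})=P(G;u,\gv{X})$, where $\gv{X}$ is defined by $X_a=(1+x_{a_1})(1+x_{a_2})\cdots(1+x_{a_k})-1$, I would simply substitute $x_v := x$ for every vertex $v$ of $G_k$ on both sides. On the left, this turns the multivariate polynomial $P(G_k;u,\gv{x})$ into the bivariate $P(G_k;u,x)=\sum_{A\subs V(G_k)}x^{|A|}u^{\rk(G_k[A])}$, by definition of the bivariate $P$. On the right, every $x_{a_i}$ becomes $x$, so each $X_a$ becomes $(1+x)^k-1$, a single value independent of $a$; hence $P(G;u,\gv{X})$ collapses to $P(G;u,(1+x)^k-1)=\sum_{A\subs V(G)}((1+x)^k-1)^{|A|}u^{\rk(G[A])}$. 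Equating the two specialized sides gives exactly \eqref{eq:P_clone}.

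The only point that deserves a word of care is why substituting indeterminates is legitimate. Since $P(G_k;u,\gv{x})=P(G;u,\gv{X})$ is an identity of polynomials in the indeterminates $u$ and $(x_v)_{v\in V(G_k)}$ (the lemma is proved by comparing coefficients, not by evaluating), any ring homomorphism — in particular the substitution homomorphism sending each $x_v\mapsto x$ and fixing $u$ — preserves the identity. So no convergence or genericity issue arises; this is purely formal.

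I do not anticipate a genuine obstacle here: the substantive content was already carried by Lemma~\ref{lem:dupli} (the single-vertex cloning computation using \eqref{eq:rank_aa} and \eqref{eq:rank_aa2}) and its iterated version. If anything, the one thing to state explicitly is that the iterated cloning is well-defined — i.e., the graph $G_k$ does not depend on the order in which vertices are cloned, and the vertex sets $\{a_1,\dots,a_k\}$ are pairwise disjoint across distinct $a\in V(G)$ — but this is already asserted in the text preceding the theorem, so the proof reduces to the one-line substitution argument above.
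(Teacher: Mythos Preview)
Your proposal is correct and matches the paper's approach exactly: the paper derives Theorem~\ref{thm:p2p} in one line by substituting $x_v\mapsto x$ for all $v$ into the multivariate identity $P(G_k;u,\gv{x})=P(G;u,\gv{X})$. Your additional remarks on the legitimacy of substitution and the well-definedness of $G_k$ are sound but go beyond what the paper spells out.
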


As we will use it in the proof of Theorem~\ref{thm:q_summary}, we note
the following identity for $q$, which can be derived from
Theorem~\ref{thm:p2p} using Lemma~\ref{lem:p_q}:
\begin{align}
q(G_k;x,y)=q(G;(x-1)\frac{y^k-1}{y-1}+1, y^k).
\label{eq:Qtrans}
\end{align}

Theorem~\ref{thm:p2p} also implies the following reduction for the
interlace polynomial, which is the foundation for our results in
Section~\ref{sec:complexity}.

\newcommand{\Btwodef}{$B_2 = \{0, -1, -2\}$}
\begin{prop}
\label{prop:p_interpol}
\label{B2def}
Let \Btwodef\ and $x$ be an indeterminate. For $\u \in \QQ, \x\in
\QQ\setminus B_2$ we have $P(\u,x) \red P(\u,\x)$. (For any $\u\in
\QQ$, we write $P(\u,x)$ to denote the following computational
problem: given a graph $G$ compute $P(G;\u,x)$, which is a polynomial
in $x$ with coefficients in $\QQ$.)
\end{prop}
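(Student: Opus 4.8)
The plan is to use Theorem~\ref{thm:p2p} together with polynomial interpolation. Fix $\u\in\QQ$. Given an input graph $G$ with $n$ vertices, the polynomial $P(G;\u,x)$ has degree at most $n$ in $x$, since $P(G;\u,x)=\sum_{A\subs V}x^{|A|}\u^{\rk(G[A])}$ and $|A|\le n$. So it suffices to recover the $n+1$ coefficients of this polynomial, for which we need its value at $n+1$ distinct points. By Theorem~\ref{thm:p2p}, for each $k$ we have $P(G_k;\u,\x)=P(G;\u,(1+\x)^k-1)$; thus a single oracle call to $P(\u,\x)$ on the cloned graph $G_k$ (which has $kn$ vertices and is computable from $G$ in polynomial time for $k$ bounded by a polynomial in $n$) yields the value of $P(G;\u,\,\cdot\,)$ at the point $(1+\x)^k-1$.

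The key step is therefore to check that the points $\{(1+\x)^k-1 : k=1,\dots,n+1\}$ are pairwise distinct, so that the $n+1$ evaluations obtained from $k=1,\dots,n+1$ determine the coefficients of $P(G;\u,x)$ uniquely, and hence $P(G;\u,x)$ can be computed by solving a Vandermonde system. Distinctness of these points is equivalent to the powers $(1+\x)^1,(1+\x)^2,\dots,(1+\x)^{n+1}$ being pairwise distinct, i.e.\ to $1+\x$ being neither $0$ (which is excluded by $\x\neq -1$) nor a root of unity other than $1$ (excluded because $\QQ\subs\R$, so the only real roots of unity are $\pm1$, and $1+\x=1$ is excluded by $\x\neq 0$, $1+\x=-1$ by $\x\neq -2$). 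This is exactly why the excluded set is $B_2=\{0,-1,-2\}$: for $\x\in\QQ\setminus B_2$ the real number $1+\x$ has infinite multiplicative order, so all the cloning points are distinct.

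Finally I would assemble the reduction: on input $G$, compute $G_1=G,G_2,\dots,G_{n+1}$, query the oracle for $P(\u,\x)$ on each to obtain values $v_k=P(G;\u,(1+\x)^k-1)$, solve the resulting interpolation problem over $\QQ$ for the coefficients of $P(G;\u,x)$, and output that polynomial (represented by its coefficient vector). Each step runs in time polynomial in the size of $G$, and the number of oracle calls is $n+1$, so this is a valid polynomial-time Turing reduction $P(\u,x)\red P(\u,\x)$. I expect the only real subtlety to be the argument above that the interpolation points are distinct, i.e.\ pinning down precisely which $\x$ must be excluded; everything else (degree bound, size of $G_k$, solving the linear system over $\QQ$) is routine.
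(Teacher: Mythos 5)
Your proof is correct and follows essentially the same approach as the paper: clone the graph to get $n+1$ evaluation points $(1+\x)^k-1$ via Theorem~\ref{thm:p2p}, observe distinctness given $\x\notin B_2$, and interpolate. You spell out slightly more carefully why $B_2$ is exactly the right excluded set (real roots of unity are only $\pm1$), but the argument is the same.
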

\begin{proof}
  Let $\u$ and $\x$ be given such that they fulfill the precondition
  of the proposition. Given a graph $G=:G_1$ with $n$ vertices, we
  build $G_2, G_3, \ldots, G_{n+1}$, where $G_i$ is obtained out of
  $G$ by cloning each vertex $i-1$ times. This is possible in time
  polynomial in $n$.  By Theorem~\ref{thm:p2p}, a call to an oracle
  for $P(\u,\x)$ with input $G_i$ gives us $P(G;\u,(1+\x)^i-1)$ for
  $i=1, \ldots, n+1$.  The restriction on $\x$ guarantees that for
  $i=1, 2, 3, \ldots$ the expression $(1+\x)^i-1$ evaluates to
  pairwise different values. Thus, for $P(G;\u,x)$, which is a
  polynomial in $x$ of degree $\leq n$, we have obtained the values at
  $n+1$ distinct points.  Using Lagrange interpolation we determine
  the coefficients of $P(G;\u,x)$.
\end{proof}

\subsection{Adding Combs}

The comb transformation sometimes helps, when cloning has not the
desired effect. Let $G=(V,E)$ be a graph and $a\in V$ some vertex.
Then we define the $k$-comb of $G$ at $a$ as $G_{a,k}=(V\uni \{a_1,
\ldots, a_k\}, E\uni \{ \{a,a_1\}, \ldots, \{a,a_k\} \})$, with $a_1,
\ldots, a_k$ being new vertices.

Using similar arguments as with vertex cloning, adding combs to
vertices yields a point-to-point reduction for the interlace
polynomial, too.
\begin{thm}
\label{thm:p2p_comb}
  Let $G$ be a graph and $G_k$ be obtained out of $G$ by performing a
  $k$-comb operation at every vertex. Then
\begin{equation}
\label{eq:combtrans}
P(G_k;u,x) =
p(k,u,x)^{|V(G)|}P(G;u,x/p(k,u,x)),
\end{equation}
where $p(k,u,x) = (1+x)^k(xu^2+1) - xu^2$.
\end{thm}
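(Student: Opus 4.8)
The plan is to follow the same two-step strategy used for cloning in Section~\ref{ssec:cloning}: first analyze the effect of adding a single $k$-comb at one distinguished vertex $a$ using the multivariate polynomial $P(G;u,\gv{x})$, then iterate over all vertices and finally substitute $x_v = x$ everywhere. So let $G=(V,E)$, let $a\in V$ be the vertex at which we attach pendant vertices $a_1,\ldots,a_k$, and call the resulting graph $G_a$. I want to compute $P(G_a;u,\gv{x})$ in terms of $P(G;u,\gv{X})$ for a suitable relabeling $\gv{X}$ that modifies only the variable at $a$.

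First I would split the sum defining $P(G_a;u,\gv{x})$ according to which of the pendant vertices $a_1,\ldots,a_k$ lie in the chosen subset $A$, and whether $a\in A$. The key rank computation is this: if $a\notin A$, then each $a_i\in A$ contributes an isolated vertex, so $\rk(G_a[A]) = \rk(G[A\cap V])$ regardless of how many $a_i$ are included; if $a\in A$, then the pendant edges $\{a,a_i\}$ with $a_i\in A$ each add a $2\times 2$ block of rank $2$ that is otherwise disconnected from the rest, except we must be careful about the self-loop at $a$ — in fact a pendant $K_2$ (possibly with a loop at $a$) always has rank $2$, so including $j\ge 1$ of the pendant vertices when $a\in A$ gives $\rk(G_a[A]) = \rk(G[(A\cap V)]) + 2$ when $a$ has a loop or not... here I need to check the loop case separately, but in all cases adding at least one pendant neighbour of $a$ raises the rank by exactly $2$ over $\rk(G[A\cap V]\setminus\{a\}\cup\{a\})$ corrected appropriately. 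Summing the geometric-type contributions of the $a_i$ variables: when $a\notin A$ we get a factor $\prod_i(1+x_{a_i})$; when $a\in A$ we get $1 + (\prod_i(1+x_{a_i}) - 1)u^2$ as the combined weight coming from the pendant vertices (the lone term where no $a_i$ is chosen contributes $u^0$, and the rest contribute $u^2$). With all $x_{a_i}$ eventually set to $x$, these become $(1+x)^k$ and $1 + ((1+x)^k-1)u^2 = (1+x)^k u^2 - u^2 + 1$. Factoring out $(1+x)^k$ from the $a\in A$ weight, one sees the ratio of the ``$a\in A$'' weight to the ``$a\notin A$'' weight is $\bigl((1+x)^k u^2 - xu^2 - u^2 + 1 + ... \bigr)$; matching against the claimed $p(k,u,x)=(1+x)^k(xu^2+1)-xu^2$ will pin down exactly how the $x_a$-variable gets renormalized, namely $X_a = x_a/p(k,u,x)$ after pulling the common factor $p(k,u,x)$ out of every vertex's contribution.

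Concretely, the single-vertex lemma should read $P(G_a;u,\gv{x}) = p\cdot P(G;u,\gv{X})$ where $p = p(k,u,x)|_{\text{pendant vars}=x}$, $X_v = x_v$ for $v\ne a$, and $X_a = x_a/p$; this is the exact analogue of Lemma~\ref{lem:dupli}, proved by writing out both sides as sums over $A\subs V$ and comparing the coefficient of each $x_A$. Iterating over all $|V(G)|$ vertices (the operations at distinct vertices commute, since each only touches its own pendant vertices and its own variable), each vertex contributes one factor of $p$, giving the $p^{|V(G)|}$ prefactor, and the composite relabeling sends every $x_v$ to $x_v/p$. Substituting $x_v = x$ for all $v$ then yields exactly \eqref{eq:combtrans}.

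\textbf{Main obstacle.} The routine part is the bookkeeping of the sum; the one genuinely delicate point is the rank analysis when $a$ carries a self-loop, since then $G_a[\{a,a_i\}]$ has adjacency matrix $\left(\begin{smallmatrix}1&1\\1&0\end{smallmatrix}\right)$ rather than $\left(\begin{smallmatrix}0&1\\1&0\end{smallmatrix}\right)$ — both have rank $2$ over $\F_2$, which is why the final formula is uniform in whether $a$ has a loop, but I would want to verify that adding several pendant vertices together with a looped $a$ still only raises the rank by exactly $2$ (it does: the pendant block decouples and $\F_2$-rank of $j$ pendant edges sharing the looped vertex $a$ contributes $a$ plus one independent row, i.e.\ $+2$ over the rank of $G[A\cap V]$ with $a$ deleted, uniformly). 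Getting this invariance right, and correctly identifying the normalization that produces precisely $p(k,u,x)$ rather than some unfactored variant, is where the care is needed; everything else parallels the cloning argument verbatim.
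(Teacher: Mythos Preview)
Your overall strategy---analyze one comb via the multivariate polynomial, obtain a single-vertex identity $P(G_{a,k};u,\gv{x}) = p\cdot P(G;u,\gv{X})$ with $X_a=x_a/p$, iterate over all vertices, then substitute---is exactly the paper's, and the single-vertex lemma you state at the end is correct.

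The gap is in your rank analysis and the resulting grouping of terms. When $a\in A$ together with at least one pendant $a_i$, the rank is $\rk(G[A\cap V'])+2$ with $V'=V\setminus\{a\}$, i.e.\ $+2$ over the rank with $a$ \emph{deleted} (you say this correctly in your final paragraph, but not in your weight computation). By contrast, when $a\in A$ with no pendant the rank is $\rk(G[(A\cap V')\cup\{a\}])=r_a(A')$. These two exponents do not differ by a fixed power of $u$: the gap $r_a(A')-r(A')$ depends on $A'$. Hence your proposed ``$a\in A$'' weight $1+\bigl(\prod_i(1+x_{a_i})-1\bigr)u^2$ is not a valid common factor---the lone-$a$ term lives over $u^{r_a(A')}$, not over $u^{r(A')}$.

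The correct grouping, which is what the paper does, puts the $a\notin A$ case together with the ``$a\in A$ plus some pendant'' case (both have base exponent $r(A')$), giving the factor
\[
p(k,u,\gv{x})=\prod_i(1+x_{a_i})\;+\;x_a u^2\Bigl(\prod_i(1+x_{a_i})-1\Bigr),
\]
and leaves the lone-$a$ term $x_a u^{r_a(A')}$ by itself; this is precisely the term that becomes $p\cdot X_a\, u^{r_a(A')}$ once you set $X_a=x_a/p$. With all $x_{a_i}=x_a=x$ the displayed $p$ is $(1+x)^k(1+xu^2)-xu^2$, as claimed. After this regrouping your argument goes through verbatim, and the self-loop at $a$ is indeed irrelevant for the reason you give: the pendant row lets you clear the entire $a$-row and $a$-column regardless of the diagonal entry.
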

\begin{proof}
  The adjacency matrices of the original graph $G$ (the graph
  $G_{a,k}$ with a $k$-comb at $a$, resp.) are
\begin{equation}
  \label{eq:komb_matrix}
\begin{array}{c|cc}
  & a    & V' \\ \hline
a & b    & \vc{c} \\
V'& \vc{c}^T  & A
\end{array}
\quad\quad \text{and} \quad \quad
\begin{array}{c|cccccc}
      & a_1 & a_2 & \ldots & a_k & a & V' \\ \hline
  a_1 &     &     &  &     & 1 &       \\ 
  a_2 &     &     &  &     & 1 &        \\ 
  \vdots &  &   &   &  & \vdots &  \\ 
  a_k &     &     &  &     & 1 &        \\ 
  a   & 1   & 1   & \ldots &  1  & b    & \vc{c} \\
  V'  &        &        &  &  \vc{}   & \vc{c}^T  & A_{11} \\ 
\end{array}
\quad, \ \text{resp.},
\end{equation}
with empty entries being zero.  Consider $A\subs V(G_{a,k})$. Let $M
:= A \cut \{a, a_1, \ldots, a_k\}$. By \eqref{eq:komb_matrix}, the
rank of $G_{a,k}$ is related to the rank for $G$ in the following way:
\begin{itemize}
\item If $a \not \in M$, then $\rk(G_{a,k}[A]) = \rk(G[A\setminus M])$.
\item If $a \in M$ and $M\cut \{a_1, \ldots, a_k\}\neq \emptyset$,
  then $\rk(G_{a,k}[A])=\rk(G[A\setminus M])+2$: Let w.l.o.g.\ $a_1
  \in M$.  Consider the adjacency matrix of $G_{a,k}[A]$ and the
  following operations on it, which leave the rank unchanged. Using
  the first column we remove all $1$s in the $a$-row, except the $1$
  in the first column. Using the first row we remove all $1$s in the
  $a$-column, except the $1$ in the first row. The resulting matrix
  $B$ is a $(k+|V|) \times (k+|V|)$ matrix with $1$s at positions
  $(a,a_1)$ and $(a_1,a)$, the submatrix of $A_{11}$ induced by
  $A\setminus M$ in the lower right corner and zeros everywhere else.
  Thus $\rk(B)=\rk(G[A\setminus M])+2$.
\item If $M=\{a\}$, then $\rk(G_{a,k}[A]) = \rk(G[A])$.
\end{itemize}

Letting $r(A) := \rk(G[A])$ and $r_a(A) := \rk(G[A\uni \{a\}])$ for
$A\subs V'$, we see that $P(G_{a,k};u,\gv{x})$ equals
\begin{align*}
  \sum_{A\subs V'} x_A \Big(&u^{r(A)} \big(
  \underbrace{\sum_{\emptyset \subs S \subs \sak} x_S
    + x_a u^2 \cdot \sum_{\emptyset \subsetneq S \subs \sak} x_S}_{=:p(k,u,\gv{x})} \big) \\
  & + x_a u^{r_a(A)} \Big)
\end{align*}
Note that $p(k,u,\gv{x})$ does only depend on $x_a, x_{a_1}, \ldots,
x_{a_k}$, but not on $x_v$ for any $v\in V'$. As we have
\[ P(G;u,\gv{X}) = \sum_{A\subs V'}X_A (u^{r(A)} + X_a u^{r_a(A)}),
\]
we conclude
\[ P(G_{a,k};u,\gv{x}) = p(k,u,\gv{x})P(G;u,\gv{X}),
\]
where $X_v=x_v$ for $v\in V'$ and $X_a = \frac{x_a}{p(k,u,\gv{x})}$.

We can perform a $k$-comb operation at every $a\in V$ and call the
result $G_k$. Substituting $x$ for $x_v, v\in G_k,$ concludes the
proof.
\end{proof}

This yields
\begin{prop}
\label{prop:comb}
Let $p(k,u,x)=(1+x)^k(xu^2+1)-xu^2$. Let $k$ be a positive integer and
$\u,\x\in \QQ$. If $p(k,\u,\x)\neq 0$, we have
$P(\u,\x/p(k,\u,\x))\redm P(\u,\x)$. \qed
\end{prop}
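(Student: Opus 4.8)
The plan is to read off the claim as an immediate consequence of Theorem~\ref{thm:p2p_comb}. Fix $\u,\x\in\QQ$ and the positive integer $k$ with $p:=p(k,\u,\x)\neq 0$. Since $\QQ$ is a field, $\x/p\in\QQ$, so the target problem $P(\u,\x/p)$ is indeed well-defined. Given an instance $G$ of $P(\u,\x/p)$ with $n:=|V(G)|$ vertices, I would first construct the graph $G_k$ obtained from $G$ by performing a $k$-comb operation at every vertex. This adds $kn$ new vertices and $kn$ new edges, and since $k$ is a constant the construction runs in time polynomial in the size of $G$.

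Next I would invoke Theorem~\ref{thm:p2p_comb}, which gives the polynomial identity
\[ P(G_k;u,x) = p(k,u,x)^{n}\,P\!\left(G;u,x/p(k,u,x)\right); \]
specializing $u=\u$ and $x=\x$ yields $P(G_k;\u,\x) = p^{\,n}\,P(G;\u,\x/p)$. A single query to an oracle for $P(\u,\x)$ on input $G_k$ therefore returns $P(G_k;\u,\x)$, from which, using $p\neq 0$, I recover the desired value $P(G;\u,\x/p) = P(G_k;\u,\x)/p^{\,n}$. Thus the map $G\mapsto G_k$, together with the (fixed, polynomial-time computable) post-processing of dividing the oracle's answer by $p^{\,n}$, is the required reduction.

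I do not expect any genuine obstacle here: all the real work is already carried out in Theorem~\ref{thm:p2p_comb}. The only points deserving a word of care are (i) checking that $\x/p$ lies in the fixed field $\QQ$ so that $P(\u,\x/p)$ makes sense as a computational problem in the sense of Section~\ref{ssec:graph_polynomials}, and (ii) observing that the hypothesis $p(k,\u,\x)\neq 0$ is exactly what guarantees that the scalar $p^{\,n}$ is nonzero, so the final division is legitimate; the value $p^{\,n}$ itself is computable in polynomial time from $n$ and the fixed data $k,\u,\x$.
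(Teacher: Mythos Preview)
Your proposal is correct and matches the paper's approach exactly: the paper simply writes \qed after the statement, treating it as an immediate corollary of Theorem~\ref{thm:p2p_comb}, and you have spelled out precisely that derivation (construct $G_k$, apply the identity, divide by the nonzero scalar $p^{\,n}$). The two remarks you flag---that $\x/p\in\QQ$ and that $p\neq 0$ makes the division legitimate---are the only points worth noting, and you have handled them.
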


\subsection{Adding Cycles}

Let $G=(V,G)$ be a graph and $a\in V$ some vertex. Consider the graph
$G_{a,k}=(V\uni\{1, 2, \ldots,
k-1\},E\uni\{\{a,1\},\{a,k-1\}\}\uni\{\{i-1,i\}\ |\ 1<i < k\})$, with
$1, 2, \ldots, k-1$ being new vertices. We say that $G_{a,k}$ has been
obtained out of $G$ by adding a $k$-cycle to $a$.

\begin{thm}
\label{thm:p2p_cycles}
  Let $G$ be a graph and $G_k$ be obtained out of $G$ by adding a
  $k$-cycle to every vertex. Then
  $P(G_k;u,x)=p_k(u,x)P(G;u,q_k(u,x)/p_k(u,x))$ for $k=3,4$ with
  $p_3(u,x)=1+2x+3x^2u^2$, $q_3(u,x)=x+x^3u^2$,
  $p_4(u,x)=1+3x+x^2+2x^2u^2+x^3u^2$ and
  $q_4(u,x)=x^2+2x^3u^2+x^4u^2$.
\end{thm}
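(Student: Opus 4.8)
The plan is to mimic the proofs of Lemma~\ref{lem:dupli} and Theorem~\ref{thm:p2p_comb}: analyze the effect of adding a \emph{single} $k$-cycle to one vertex $a$ using the multivariate polynomial $P(G;u,\gv{x})$, show it multiplies $P$ by a factor $p_k(u,\gv{x})$ while replacing the variable $x_a$ by $q_k(u,\gv{x})/p_k(u,\gv{x})$ (where $p_k,q_k$ depend only on the variables attached to $a$ and the new cycle vertices), and then substitute $x$ for every variable and perform the operation at all vertices simultaneously. So first I would fix $a\in V$, write $V' = V\setminus\{a\}$, and set $r(A) := \rk(G[A])$, $r_a(A) := \rk(G[A\uni\{a\}])$ for $A\subs V'$, exactly as in the comb proof. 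The goal is a formula
\begin{equation*}
P(G_{a,k};u,\gv{x}) = \sum_{A\subs V'} x_A\bigl(p_k(u,\gv{x})\,u^{r(A)} + q_k(u,\gv{x})\,u^{r_a(A)}\bigr),
\end{equation*}
which then equals $p_k(u,\gv{x})\,P(G;u,\gv{X})$ with $X_v = x_v$ for $v\in V'$ and $X_a = q_k(u,\gv{x})/p_k(u,\gv{x})$.

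The core of the argument is the rank analysis. For $A\subs V(G_{a,k})$, let $S := A\cut\{1,\dots,k-1\}$ be the chosen subset of new cycle vertices, and split into the cases $a\notin A$ and $a\in A$. When $a\notin A$, the induced subgraph on $A$ is just a union of paths among the cycle vertices (plus $G[A\cap V']$), and $\rk(G_{a,k}[A]) = \rk(G[A\cap V']) + \rk(P_S)$, where $\rk(P_S)$ is the $\F_2$-rank of the adjacency matrix of the path-forest induced by $S$ on the $(k-1)$-cycle-minus-$a$ path. When $a\in A$, the cycle vertices together with $a$ form an induced subpath/subgraph of the full $k$-cycle, and a short computation (removing $a$'s extra incidences by row/column operations as in the comb case, or directly) gives $\rk(G_{a,k}[A]) = \rk(G[A\cap V']) + c_S$ if $a$ is isolated from the rest of $G$ inside this block, versus $\rk(G[(A\cap V')\uni\{a\}]) + c_S'$ otherwise — with the correction terms $c_S, c_S'$ depending only on which cycle vertices are present. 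Summing $x_S u^{(\text{correction})}$ over all $S\subs\{1,\dots,k-1\}$, separately for the ``$u^{r(A)}$ branch'' and the ``$u^{r_a(A)}$ branch'', produces the polynomials $p_k(u,\gv{x})$ and $q_k(u,\gv{x})$; specializing $x_v = x$ for all $v$ collapses these to the claimed $p_3,q_3,p_4,q_4$ (for $k=3$ there are $2^2 = 4$ subsets $S$, for $k=4$ there are $2^3 = 8$, so this is a finite check). Finally, applying the single-cycle identity at every vertex of $G$ — legitimate since the factor at $a$ involves only $x_a$ and $a$'s private cycle vertices, which are disjoint for different $a$ — and substituting $x$ everywhere yields $P(G_k;u,x) = p_k(u,x)^{|V|}P(G;u,q_k(u,x)/p_k(u,x))$; absorbing the $p_k(u,x)^{|V|}$ is exactly as in Theorem~\ref{thm:p2p_comb}, and the stated form follows (the theorem as written suppresses the exponent $|V(G)|$, but it is implicit, or one notes $P(G;u,\cdot)$ is homogeneous enough to rescale).

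The main obstacle I expect is the rank bookkeeping for the cycle vertices when $a\in A$: unlike the comb case, where the new vertices are mutually non-adjacent and each contributes a clean rank $0$ or $+2$, here the new vertices form a path, so the induced subgraph on $\{a\}\cup S$ is a disjoint union of subpaths of a cycle, and I need the exact $\F_2$-rank of the adjacency matrix of a union of paths (an isolated vertex contributes $0$, a path on $m\ge 1$ edges contributes $2\lfloor (m+1)/2\rfloor$ corrected for parity — i.e. paths on an even number of vertices have full rank, odd-order paths are rank-deficient by one). Getting the parity right, and correctly deciding when including $a$ ``connects'' the cycle-block to $G$ (raising the $G$-rank from $r(A)$ to $r_a(A)$) versus when $a$ sits in an odd isolated stub that does not, is the delicate part — but for $k=3,4$ it reduces to enumerating a handful of cases, which is why the theorem is stated only for those two values of $k$.
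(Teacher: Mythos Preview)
Your proposal is correct and follows exactly the paper's approach: analyze a single $k$-cycle addition at one vertex $a$ via the multivariate polynomial, carry out a rank case analysis by explicit row/column operations on the adjacency matrix (the paper does this for $k=3$ and says ``similarly'' for $k=4$), and then apply the resulting identity at every vertex with the uniform substitution $x_v=x$. You are also right that the prefactor should be $p_k(u,x)^{|V(G)|}$ as in Theorem~\ref{thm:p2p_comb} --- the stated theorem drops the exponent by a slip --- but your parenthetical about homogeneity of $P(G;u,\cdot)$ is not correct, so simply keep the exponent.
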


\begin{prop}
\label{prop:cycles}
  $P(0,1)\redm P(0,-1)$ and $P(\u,-4) \redm P(\u,-2)$ for every $\u
  \in \QQ$.
\end{prop}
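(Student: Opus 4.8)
The plan is to read off both reductions directly from Theorem~\ref{thm:p2p_cycles}, which already supplies, for $k=3,4$, a multiplicative-plus-shift identity $P(G_k;u,x)=p_k(u,x)P(G;u,q_k(u,x)/p_k(u,x))$ when $G_k$ is obtained by attaching a $k$-cycle to every vertex. The strategy for each claim is: pick the right $k$ and the right base point $(\u,\x)$, check that the prefactor $p_k(\u,\x)$ is nonzero (so that division is legal and the reduction is genuinely many-one: one oracle call on $G_k$, then divide by the known scalar $p_k(\u,\x)^{\text{something}}$ — actually $p_k$ itself, since the theorem has already folded in the $|V(G)|$-th power), and finally verify that the target argument $q_k(\u,\x)/p_k(\u,\x)$ equals the claimed value. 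Since $G_k$ is computable from $G$ in polynomial time, this gives $P(\u,q_k(\u,\x)/p_k(\u,\x))\redm P(\u,\x)$.

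For the first statement, $P(0,1)\redm P(0,-1)$, I would take $u=0$ and $k=3$. At $u=0$ the formulas collapse dramatically: $p_3(0,x)=1+2x$ and $q_3(0,x)=x$, so $q_3(0,x)/p_3(0,x)=x/(1+2x)$. Setting $\x=-1$ gives $p_3(0,-1)=1+2(-1)=-1\neq 0$ and $q_3(0,-1)/p_3(0,-1)=(-1)/(-1)=1$. Hence the $3$-cycle transformation at $u=0$, $x=-1$ reduces $P(0,1)$ to $P(0,-1)$, with the scalar prefactor $(-1)^{|V(G)|}$ known exactly. That is precisely the claim.

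For the second statement, $P(\u,-4)\redm P(\u,-2)$ for every $\u\in\QQ$, I would use $k=4$ at the base point $\x=-2$. Here I need to evaluate $p_4(u,-2)=1+3(-2)+(-2)^2+2(-2)^2u^2+(-2)^3u^2 = 1-6+4+8u^2-8u^2 = -1$, which is nonzero independently of $u$ — this is the key coincidence that makes the reduction work for all $\u$. Then $q_4(u,-2)=(-2)^2+2(-2)^3u^2+(-2)^4u^2 = 4-16u^2+16u^2 = 4$, so $q_4(\u,-2)/p_4(\u,-2) = 4/(-1) = -4$. Thus attaching a $4$-cycle to every vertex and invoking Theorem~\ref{thm:p2p_cycles} reduces $P(\u,-4)$ to $P(\u,-2)$, again with an explicitly known prefactor $(-1)^{|V(G)|}$.

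The only real content beyond arithmetic is making sure the reductions are legitimate, i.e.\ that we are not dividing by zero and that the prefactor is efficiently computable; both are handled by the explicit evaluations $p_3(0,-1)=-1$ and $p_4(\u,-2)=-1$ above. I expect the main obstacle — really the only thing to be careful about — is the bookkeeping of the prefactor: Theorem~\ref{thm:p2p_cycles} as stated attaches one scalar $p_k(u,x)$ (already incorporating the per-vertex power over $|V(G)|$ vertices implicitly in how the theorem was derived), so the division on the oracle's output must match that exact power. Once that is pinned down, both halves of Proposition~\ref{prop:cycles} follow immediately by substitution.
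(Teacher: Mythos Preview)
Your proposal is correct and takes exactly the same approach as the paper, which simply cites Theorem~\ref{thm:p2p_cycles} with $k=3$ for the first reduction and $k=4$ for the second; your explicit arithmetic verifications $p_3(0,-1)=-1$, $q_3(0,-1)/p_3(0,-1)=1$, $p_4(\u,-2)=-1$, $q_4(\u,-2)/p_4(\u,-2)=-4$ are exactly what is needed. Your hesitation about the prefactor is warranted---comparing with Theorem~\ref{thm:p2p_comb} and the proof of Theorem~\ref{thm:p2p_cycles}, the scalar should indeed be $p_k(u,x)^{|V(G)|}$ rather than $p_k(u,x)$---but as you note this is harmless bookkeeping since the factor is explicitly known and nonzero.
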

\begin{proof}
  The first reduction follows from Theorem~\ref{thm:p2p_cycles} adding
  $3$-cycles, the second adding $4$-cycles.
\end{proof}

\begin{proof}[Proof of Theorem~\ref{thm:p2p_cycles}]
  We use the same idea as in the proof of Theorem~\ref{thm:p2p_comb}.
  Consider the case of a $3$-cycle added at vertex $a$. Let
  $V'=V\setminus \{a\}$. The adjacency matrix of $G_{a,3}$ is
\[
\begin{array}{c|cccc}
    & 1 & 2 & a & V' \\ \hline
  1 &   & 1 & 1 &           \\ 
  2 & 1 &   & 1 &             \\ 
  a & 1 & 1 & b & \vc{c} \\
  V'&   &   & \vc{c}^T &  A_{11} \\ 
\end{array}
\]
with empty entries being zero. Adding the second row to the first row
and the second column to the first column and subsequently the first
row to the third row and the first column to the third column does not
change the rank and gives
\[
\begin{array}{c|cccc}
    & 1 & 2 & a & V' \\ \hline
  1 &   & 1 &   &           \\ 
  2 & 1 &   &   &             \\ 
  a &   &   & b & \vc{c} \\
  V'&   &   & \vc{c}^T &  A_{11} \\ 
\end{array}.
\]
This shows that $\rk(G_{a,3}[A]=\rk(G[A\setminus\{1,2\}])+2$ for all
$A$, $\{1,2,a\}\subs A \subs V(G_{a,3})$. Using arguments similar to
this one and the ones in the proof of Theorem~\ref{thm:p2p_comb} we
find that
\begin{itemize}
\item $\rk(G_{a,3}[A])=\rk(G[A\cut V']+2$ for all $A$, $\{a\}\subs A\subs
  V(G_{a,3})$ and either $1\in A$ or $2 \in A$,
\item $\rk(G_{a,3}[A])=\rk(G[A]$ for all $A$, $\{a\}\subs A \subs
  V(G_{a,3})$ and $\{1,2\}\cut A =\emptyset$,
\item $\rk(G_{a,3}[A])=\rk(G[A\cut V'])+\rk(P_2[A\cut V(P_2)])$ for
  all $A\subs V(G_{a,3})$, $a\not \in A$, where $P_2$ is the the path
  with two vertices $1,2$.
\end{itemize}
Letting again $r(A) := \rk(G[A])$ and $r_a(A) := \rk(G[A\uni \{a\}])$ for
$A\subs V'$, we see that $P(G_{a,3};u,\gv{x})$ equals
\begin{align*}
  \sum_{A\subs V'} x_A \Big(&u^{r(A)} \big(
  \underbrace{1+x_1+x_2+x_1x_2u^2+x_1x_au^2+x_2x_au^2}_{=:p_3(u,\gv{x})} \big) \\
  & + u^{r_a(A)}(\underbrace{x_a + x_1x_2x_au^2}_{=:q_3(u,\gv{x})}) \Big),
\end{align*}
which equals $p_3(u,\gv{x})P(G;u,\gv{X})$ if we define $\gv{X}$ by
$X_v=x_v$ for $v\in V'$ and $X_a=q_3(u,\gv{x})/p_3(u,\gv{x})$. We can
use this identity for every vertex $a$ and substitute $x_a$, $a\in V$,
by a single variable $x$. This gives the statement of the theorem
concerning $3$-cycles. For $4$-cycles we proceed in a similar fashion.
\end{proof}

\section{Complexity of evaluating the Interlace Polynomial exactly}
\label{sec:complexity}

The goal of this section is to uncover the complexity maps for $P$ and
$q$ as indicated in Figure~\ref{fig:pqcomplexity}. While the left hand
side (complexity map for $P$) is intended to follow the
\emph{arguments} which prove the hardness, the right hand side
(complexity map for $q$) focuses on presenting the \emph{results}.

\newcommand{\alphadef}{$\alpha = \sqrt{2}$}
\newcommand{\betadef}{$\beta = 1/\sqrt{2}$}

\begin{figure}
  \centering
  \includegraphics{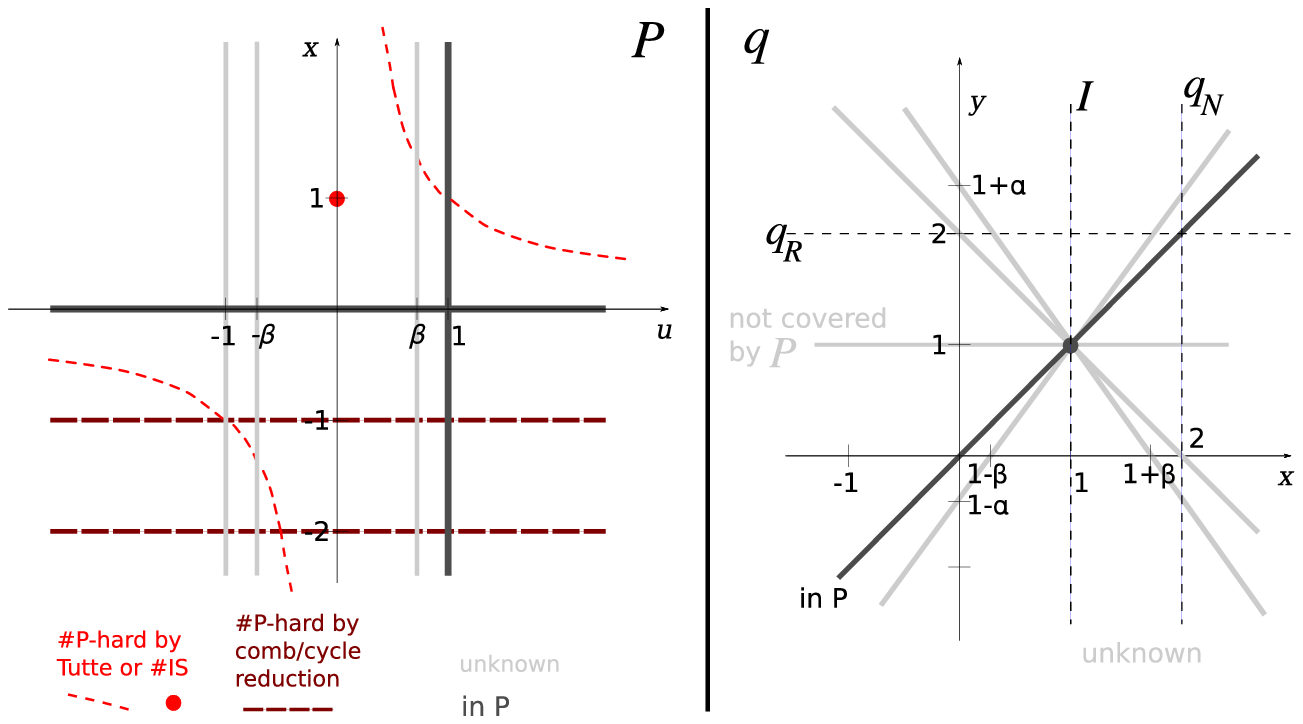}  
  \caption{Complexity of the interlace polynomials $P$ and $q$.
    \alphadef, \betadef}
  \label{fig:pqcomplexity}
\end{figure}

\begin{rem}
\label{rem:Peasy}
$P(\u,0)$ and $P(1,\x)$ are trivially solvable in polynomial time for
any $\u,\x\in \QQ$, as $P(G;\u,0)=1$ and $P(G;1,\x) = (1+\x)^{|V|}$.
\qed
\end{rem}
Thus, on the thick black lines $x=0$ and $u=1$ in the left half of
Figure~\ref{fig:pqcomplexity}, $P$ can be evaluated in polynomial
time. By Lemma~\ref{lem:p_q}, these lines in the complexity map for
$P$ correspond to the point $(1,1)$ and the line $x=y$, resp., in the
complexity map for $q$, see the right half of
Figure~\ref{fig:pqcomplexity}.

\subsection{Identifying hard points}
\label{ssec:collecting_hardness}

We want to establish Corollary~\ref{cor:qNhard} and
Remark~\ref{rem:IShard} which tell us, that $P$ is \SP-hard to
evaluate almost everywhere on the dashed hyperbola in
Figure~\ref{fig:pqcomplexity} and at $(0,1)$. To this end we collect
known hardness results about the interlace polynomial.

Let $t(G;x,y)$ denote the Tutte polynomial of an undirected graph
$G=(V,E)$. It may be defined by its state expansion as
\begin{equation} t(G;x,y)=\sum_{B\subs
\label{eq:t_def}
  E(G)}(x-1)^{r(E)-r(B)}(y-1)^{|B|-r(B)},
\end{equation}
where $r(B)$ is the $\F_2$-rank of the \emph{incidence} matrix of
$G[B] = (V,B)$, the subgraph of $G$ induced by $B$. (Note that $r(B)$
equals the number of vertices of $G[B]$ minus the number of components
of $G[B]$, which is the rank of $B$ in the cycle matriod of $G$.) For
details about the Tutte polynomial we refer to standard literature
\cite{tutte_graph_theory, tutte_appl, welsh_complexity}. The
complexity of the Tutte polynomial has been studied extensively.  In
particular, the following result is known.  \newcommand{\teasy}{\{0,
  1, 2, 1\pm \sqrt{2}\}}
\begin{thm}[\cite{vertigan_tutte_planar}]
\label{thm:tutteEasyXX}
Evaluating the Tutte polynomial of planar graphs at $(\x,\x)$ is
\SP-hard for all $\x\in \QQ$ except for $\x \in \teasy$.
\end{thm}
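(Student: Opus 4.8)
The plan is to run the Jaeger--Vertigan--Welsh programme for the Tutte polynomial, taking care that every graph operation used preserves planarity. There are two tasks: verify that the listed values are genuinely easy, and show that at every other $\x\in\QQ$ evaluation is \SP-hard by transplanting a single ``seed'' hardness result along a hyperbola.

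\emph{The easy values.} Write $\lambda=(\x-1)^2$, so $(\x,\x)$ lies on $H_\lambda:=\{(x,y):(x-1)(y-1)=\lambda\}$. The diagonal meets the tractable part of the Tutte plane exactly at: $\x=1$, where $t(G;1,1)$ counts maximal spanning forests by the Matrix--Tree theorem; $\x\in\{0,2\}$, i.e.\ $\lambda=1$, where on all of $H_1$ the state sum collapses to $t(G;x,y)=x^{|E|}(x-1)^{r(E)-|E|}$, which is read off from $|E|$ and $r(E)$; $\x=-1$, where $t(G;-1,-1)=\pm2^{d(G)}$ with $d(G)$ the $\F_2$-dimension of the bicycle space; and the planarity-specific case $\x\in\{1\pm\sqrt{2}\}$, i.e.\ $\lambda=2$, where $(\x,\x)$ sits on the Ising hyperbola and $t$ of a planar graph is polynomial-time computable by the Fisher--Kasteleyn--Onsager (Pfaffian) evaluation of the planar Ising partition function.

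\emph{Transporting hardness along a hyperbola.} For the remaining $\x$ we have $\lambda\notin\{0,1,2\}$ and $\x\neq-1$. The tools are the two planarity-preserving Tutte operations: the $k$-stretch $\sigma_k$ (subdivide every edge into a path of $k$ edges) and the $k$-thickening $\tau_k$ (replace every edge by $k$ parallel edges). Each obeys an identity expressing $t(\tau_k(G);x,y)$ as an easily computed prefactor times $t(G;x_k,y_k)$, where $(x_k,y_k)$ stays on $H_\lambda$ (thickening sends the coordinate $y$ to $y^k$, stretching does the dual move) and, because $\x\notin\{-1,0,1\}$, the images $(x_k,y_k)$ for $k=1,2,3,\dots$ are pairwise distinct. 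Parametrising $H_\lambda$ rationally, the restriction of $t(G;\cdot)$ to $H_\lambda$ is a known power of the parameter times a polynomial of degree at most $|E(G)|$; so, given an oracle for $t(\cdot;\x,\x)$, running it on $\tau_1(G),\dots,\tau_{m+1}(G)$ for an $m$-edge $G$ yields that polynomial at $m+1$ distinct nodes, and Lagrange interpolation recovers $t(G;\cdot)$ at \emph{any} chosen point of $H_\lambda$. Hence $t(\cdot;\x,\x)$ is \SP-hard as soon as $t$ is \SP-hard for planar graphs at \emph{some} point of $H_\lambda$.

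\emph{The seed, and the main obstacle.} It remains to exhibit, for each $\lambda\notin\{0,1,2\}$, one \SP-hard planar evaluation on $H_\lambda$. For an integer $\lambda=\ell\ge3$ this is classical: up to an explicit factor, $t(G;1-\ell,0)$ is the number of proper $\ell$-colourings of $G$ --- \SP-hard already for planar inputs --- and $(1-\ell,0)\in H_\ell$. The genuinely hard case, where I expect almost all the work to lie, is an arbitrary real $\lambda$ (irrational, rational non-integer, or an integer $\ell\ge3$ but at a non-colouring point of $H_\ell$): no combinatorial quantity is available to reduce from, so one must simulate a change of the \emph{parameter} $\lambda$ itself by planar gadgets. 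Following the planar Potts-model analysis of Vertigan (and Vertigan--Welsh), one builds a family of planar ``$\lambda$-gadgets'' so that substituting the $j$-th gadget for every edge of a graph $H$ produces $t$ of $H$ at a point whose hyperbola parameter is a fixed function of $\lambda$ but whose second coordinate runs over many distinct values as $j$ varies, thereby reducing --- through a chain of such substitutions and another interpolation --- to an integer-$\lambda$ colouring point. The delicate part is to keep every gadget planar throughout, to guarantee the interpolation nodes stay distinct, and to check that no prefactor vanishes --- in particular that no gadget map accidentally lands on the Ising locus $\lambda=2$, consistent with the genuine tractability there. Combining the seed with the transport step gives \SP-hardness of $t(\cdot;\x,\x)$ for planar graphs at every $\x\in\QQ$ outside the exceptional set.
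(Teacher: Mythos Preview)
The paper does not prove this theorem; it is quoted from Vertigan and used as a black box, so there is no argument in the paper against which to compare yours.

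Your sketch nonetheless follows the Jaeger--Vertigan--Welsh template faithfully: stretch and thicken (both planarity-preserving) to interpolate the Tutte polynomial along the hyperbola $H_\lambda$ through $(\x,\x)$, then reduce from a single hard seed on $H_\lambda$. Two remarks. First, you correctly include $\x=-1$ among the tractable values via the bicycle-space identity for $t(G;-1,-1)$; the exceptional set as printed in the paper omits $-1$, so what you are actually arguing for is a (correct) sharpening of the stated theorem rather than the theorem verbatim. Second, your own diagnosis of where the difficulty lies is accurate: calling the integer case $\lambda=\ell\ge3$ ``classical'' is optimistic, since the \SP-hardness of counting proper $\ell$-colourings of \emph{planar} graphs for every such $\ell$ is already part of what needs to be established, and the substantive content of Vertigan's work is precisely the planar gadget machinery that shifts between hyperbolae for arbitrary $\lambda$. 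You locate this obstacle clearly but do not overcome it, so the proposal is an honest outline rather than a proof, incomplete exactly where you say it is.
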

We will profit from this by a connection between the interlace
polynomial and the Tutte polynomial of planar graphs. This connection
is established via medial graphs. For any planar graph $G$ one can
build the oriented medial graph $\vec{G}_m$, find an Euler circuit $C$
in $\vec{G}_m$ and obtain the circle graph $H$ of $C$. The whole
procedure can be performed in polynomial time. For details we refer to
\cite{ems_distance_hereditary}. We will use

\begin{thm}[{\cite[End of Section 7]{interlace_polynomial}}; {\cite[Theorem 3.1]{ems_distance_hereditary}}]
\label{thm:tutte2interl}
Let $G$ be a planar graph, $\vec{G}_m$ be the oriented medial graph of
$G$ and $H$ be the circle graph of some Euler circuit $C$ of
$\vec{G}_m$.  Then $q(H; 2, y)=t(G;y,y)$. Thus we have $t(\y,\y)\redm
P(\frac{1}{\y-1},\y-1)$, where $t(\y,\y)$ denotes the problem of
evaluating the Tutte polynomial of a \emph{planar} graph at $(\y,\y)$.
\end{thm}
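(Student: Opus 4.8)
The plan is to prove Theorem~\ref{thm:tutte2interl} in two parts: first the polynomial identity $q(H;2,y)=t(G;y,y)$, and then the reduction claim, which will follow immediately from the identity together with Lemma~\ref{lem:p_q} and the fact that the circle graph $H$ is computable from $G$ in polynomial time.

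\medskip

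For the identity, I would first recall the construction of the oriented medial graph $\vec{G}_m$ of a connected plane graph $G$: place a vertex on each edge of $G$, join two such vertices whenever the corresponding edges are consecutive around a face, and orient the resulting $4$-regular plane graph so that at each vertex the two faces coming from faces of $G$ and the two coming from vertices of $G$ alternate consistently. One then picks an Euler circuit $C$ of $\vec{G}_m$ that respects the orientation (the ``transition'' at each $4$-valent vertex follows the straight-ahead / orientation-consistent pairing), and forms its \emph{circle graph} (interlacement graph) $H$: the vertices of $H$ are the vertices of $\vec{G}_m$ (equivalently the edges of $G$), and two are adjacent in $H$ iff they \emph{interlace} along $C$, i.e.\ appear in the cyclic order $\dots x \dots y \dots x \dots y \dots$. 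The key combinatorial input, which I would cite from \cite{ems_distance_hereditary} (or reprove via the standard correspondence between Euler circuits of the medial graph and the two kinds of ``states''), is that subsets $A$ of $V(H)=E(G)$ correspond to the $2^{|E(G)|}$ ways of choosing a transition at each vertex of $\vec{G}_m$, and that the $\F_2$-rank $\rk(H[A])$ of the interlacement matrix restricted to $A$ equals $|A| - (\text{number of components of the resulting curve system}) + (\text{something depending on } |E(G)|)$ — more precisely, that $|A|-\rk(H[A])$ records the nullity term that, after substituting $x=2$, i.e.\ $(x-1)=1$, collapses $q(H;2,y)=\sum_{A}(y-1)^{|A|-\rk(H[A])}$ into exactly the diagonal Tutte sum $\sum_{B\subseteq E}(y-1)^{|B|-r(B)}(y-1)^{r(E)-r(B)} = t(G;y,y)$, using that on the diagonal $t(G;y,y)=\sum_{B}(y-1)^{|E|-2r(B)}\cdot(y-1)^{?}$ matches after tracking the shift. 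This matrix-rank / cycle-count bookkeeping is the heart of the argument and is exactly what \cite{interlace_polynomial, ems_distance_hereditary} establish; I would state it as a black box since the excerpt explicitly permits citing those sources, rather than redo the genus/interlacement computation.

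\medskip

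Granting $q(H;2,y)=t(G;y,y)$, the reduction is routine. Given a planar graph $G$, compute in polynomial time a plane embedding, the oriented medial graph $\vec{G}_m$, an orientation-consistent Euler circuit $C$ (which exists and is findable greedily for a connected $4$-regular plane graph, handling the disconnected case componentwise), and the circle graph $H$; this is the ``whole procedure can be performed in polynomial time'' quoted before the theorem. Then $t(G;\y,\y)=q(H;2,\y)$, and by Lemma~\ref{lem:p_q} (with $x=2$, so $u=\frac{x-1}{y-1}=\frac{1}{\y-1}$ and the second argument $y-1=\y-1$) we have $q(H;2,\y)=P(H;\frac{1}{\y-1},\y-1)$. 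Hence a single oracle call for $P(\frac{1}{\y-1},\y-1)$ on input $H$ returns $t(G;\y,\y)$, giving the many-one reduction $t(\y,\y)\redm P(\frac{1}{\y-1},\y-1)$. (Strictly one should note $\y\neq 1$ so the substitution is legal; the $\y=1$ case is separately trivial and outside the range of interest.)

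\medskip

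The main obstacle is entirely in the first part: getting the rank/nullity accounting across the medial-graph correspondence exactly right, including the additive constant that depends only on $|E(G)|$ (or on the number of faces/vertices), so that after the $x=2$ specialization the powers of $(y-1)$ on the two sides coincide term by term and not merely up to a global monomial factor. Since the paper explicitly defers this to \cite{interlace_polynomial} and \cite[Theorem~3.1]{ems_distance_hereditary}, I expect the write-up here to consist of stating that theorem, citing it, and then performing only the short Lemma~\ref{lem:p_q} substitution and the ``polynomial-time constructibility of $H$'' remark to package it as the stated reduction.
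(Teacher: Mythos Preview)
Your proposal is correct and matches the paper's approach exactly: the paper's proof is just ``See the references for $q(H;2,y)=t(G;y,y)$ and use Lemma~\ref{lem:p_q},'' which is precisely what you arrive at after (unnecessarily, but harmlessly) sketching the medial-graph background. Your observation that $\y\neq 1$ is needed and that the polynomial-time constructibility of $H$ is stated just before the theorem is also in line with how the paper packages the reduction.
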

\begin{proof}
  See the references for $q(H;2,y)=t(G;y,y)$ and use
  Lemma~\ref{lem:p_q}.
\end{proof}

We set \alphadef\ and \betadef.
\newcommand{\Bonedef}{$B_1= \{\pm 1, \pm\beta, 0\}$}
\label{B1def}Let \Bonedef.  Theorem~\ref{thm:tutteEasyXX} and
Theorem~\ref{thm:tutte2interl} yield
\begin{cor}
\label{cor:qN_complexity}
\label{cor:qNhard}
Evaluating the vertex-nullity interlace polynomial $q_N$ is \SP-hard
almost everywhere. In particular, we have:
\begin{itemize}
\item The problem $q_N(2)$ is trivially solvable in polynomial time.
\item For any $\y\in \QQ \setminus \{0, 1, 2, 1\pm \alpha\}$ the
  problem $q_N(\y)=q(2,\y)$ is \SP-hard. Or, in other words, for any
  $\u\in \QQ\setminus B_1$ the problem $P(\u, 1/\u)$ is \SP-hard. \qed
\end{itemize}
\end{cor}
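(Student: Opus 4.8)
The plan is to derive the corollary by chaining the two imported results --- Vertigan's \SP-hardness of the planar Tutte polynomial on the diagonal (Theorem~\ref{thm:tutteEasyXX}) and the medial-graph reduction from the planar Tutte polynomial to $P$ (Theorem~\ref{thm:tutte2interl}) --- and then translating the conclusion back and forth between $q_N$ and $P$ via Lemma~\ref{lem:p_q}. The easy part comes first: by Lemma~\ref{lem:p_q}, $q_N(2) = q(2,2) = P(1,1)$, which is trivially computable by Remark~\ref{rem:Peasy} (concretely $q(G;2,2) = 2^{|V(G)|}$), so the point $\y = 2$ is polynomial time solvable.

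For the hardness claim I would fix $\y \in \QQ \setminus \{0,1,2,1+\alpha,1-\alpha\}$ (recall $\alpha = \sqrt{2}$). By Theorem~\ref{thm:tutteEasyXX}, evaluating the Tutte polynomial of planar graphs at $(\y,\y)$ is \SP-hard; by Theorem~\ref{thm:tutte2interl} there is a polynomial time many-one reduction $t(\y,\y) \redm P(1/(\y-1),\y-1)$, legitimate since $\y \ne 1$, and since composing a many-one reduction with a Turing reduction is again a Turing reduction, \SP-hardness passes to $P(1/(\y-1),\y-1)$. Finally Lemma~\ref{lem:p_q} gives $q_N(\y) = q(2,\y) = P((2-1)/(\y-1),\y-1) = P(1/(\y-1),\y-1)$, so $q_N(\y)$ is \SP-hard; only finitely many $\y$ are excluded, which is the asserted ``almost everywhere''.

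To rephrase the $P$-side I would substitute $\u = 1/(\y-1)$, i.e.\ $\y = 1 + 1/\u$, a bijection between $\y \in \QQ\setminus\{1\}$ and $\u \in \QQ\setminus\{0\}$ under which $P(1/(\y-1),\y-1)$ is exactly $P(\u,1/\u)$. Going through the forbidden values of $\y$: $\y = 1$ has no finite preimage, $\y = 0$ gives $\u = -1$, $\y = 2$ gives $\u = 1$, and $\y = 1\pm\alpha$ gives $\u = \pm 1/\alpha = \pm\beta$; adjoining $\u = 0$ (which also has no preimage) yields precisely $B_1 = \{\pm1,\pm\beta,0\}$, so $P(\u,1/\u)$ is \SP-hard for all $\u \in \QQ\setminus B_1$. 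The substantive content is carried entirely by the two cited theorems, so I expect no genuine obstacle; the only delicate point is this bookkeeping of exceptional sets under the rational change of variable $\u \mapsto 1+1/\u$ --- in particular checking that the pole at $\y=1$ and the non-preimage $\u=0$ line up correctly with the five excluded values of $\y$ and with $B_1$ --- which is where I would be most careful.
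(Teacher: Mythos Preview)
Your proposal is correct and follows exactly the approach the paper intends: the corollary is stated as an immediate consequence of Theorem~\ref{thm:tutteEasyXX} and Theorem~\ref{thm:tutte2interl}, and you have simply spelled out the chain $t(\y,\y)\redm P(1/(\y-1),\y-1)=q_N(\y)$ together with the bookkeeping of exceptional values under $\u=1/(\y-1)$. The only remark is that your aside about composing many-one with Turing reductions is unnecessary here---Theorem~\ref{thm:tutteEasyXX} already gives \SP-hardness of $t(\y,\y)$ directly, so a single many-one reduction suffices---but this does no harm.
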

\begin{rem}
\label{rem:IShard}
$P(0,1)$ is \SP-hard, as $P(G;0,1)$ equals the number of independent sets of $G$, which is
  \SP-hard to compute \cite{counting_IS_hard}. \qed
\end{rem}

\subsection{Reducing to hard points}

The cloning reduction allows us to spread the collected hardness over
almost the whole plane: Combining Corollary~\ref{cor:qN_complexity}
and Remark~\ref{rem:IShard} with Proposition~\ref{prop:p_interpol} we
obtain
\begin{prop}
\label{prop:PHardness}
\label{prop:PHardness_u0}
Let \Bonedef\ and \Btwodef\ (as defined on Pages~\pageref{B1def} and
\pageref{B2def}, resp.). Let $(\u,\x)\in ((\QQ\setminus B_1)\uni
\{0\})\times (\QQ\setminus B_2)$.  Then $P(\u,\x)$ is \SP-hard.  \qed
\end{prop}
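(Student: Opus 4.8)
The plan is to combine the two "seed" hardness results—Corollary~\ref{cor:qNhard} (hardness on the hyperbola $\x = 1/\u$ away from the bad set $B_1$) and Remark~\ref{rem:IShard} (hardness at the single point $(0,1)$)—with the cloning-based interpolation reduction of Proposition~\ref{prop:p_interpol}, which lets us move freely in the $\x$-direction along a fixed horizontal line $u = \u$, as long as the target $\x$ avoids $B_2 = \{0,-1,-2\}$. The key observation is that Proposition~\ref{prop:p_interpol} actually gives us the \emph{full polynomial} $P(\u, x)$ in the variable $x$ from an oracle for $P(\u, \x)$ at any single $\x \notin B_2$; once we have that polynomial we can read off its value at \emph{any} point of the line, in particular at the one "hard" point on that line supplied by the seeds.

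\textbf{Main argument.} Fix $(\u, \x) \in ((\QQ \setminus B_1) \uni \{0\}) \times (\QQ \setminus B_2)$; we must show $P(\u, \x)$ is \SP-hard. I would split into two cases according to the value of $\u$.

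First, suppose $\u \neq 0$, so $\u \in \QQ \setminus B_1$. Since $B_1 = \{\pm 1, \pm\beta, 0\}$ excludes in particular $\u = \pm 1$, the value $1/\u$ is a well-defined element of $\QQ$, and Corollary~\ref{cor:qNhard} tells us $P(\u, 1/\u)$ is \SP-hard. Now apply Proposition~\ref{prop:p_interpol} with this $\u$ and the given $\x \notin B_2$: we get $P(\u, x) \red P(\u, \x)$, i.e.\ an oracle for $P(\u,\x)$ lets us compute the polynomial $P(G;\u,x)$ for any input graph $G$, and in particular lets us evaluate it at $x = 1/\u$. Chaining the reductions, $P(\u, 1/\u) \red P(\u, \x)$, so $P(\u, \x)$ is \SP-hard. (One should note that $1/\u$ need not lie outside $B_2$, but that is irrelevant: the target of the interpolation reduction is $\x$, not $1/\u$; interpolation recovers the whole polynomial and hence its value at the arbitrary point $1/\u$.)

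Second, suppose $\u = 0$. By Remark~\ref{rem:IShard}, $P(0,1)$ is \SP-hard. Apply Proposition~\ref{prop:p_interpol} with $\u = 0$ and the given $\x \notin B_2$: this gives $P(0, x) \red P(0, \x)$, and evaluating the recovered polynomial $P(G;0,x)$ at $x = 1$ yields $P(0,1) \red P(0,\x)$, so $P(0,\x)$ is \SP-hard. In both cases $P(\u,\x)$ is \SP-hard, which is the claim. The only mild subtlety—and the single point where one must be a little careful rather than purely mechanical—is making sure the hard seed point on the line $u = \u$ (namely $1/\u$ when $\u \neq 0$, and $1$ when $\u = 0$) is genuinely available: for $\u \neq 0$ this is exactly the content of the exclusion of $\pm 1$ from $B_1$ (so $1/\u \ne \pm 1$ is not forced, but more to the point $1/\u$ is defined and $\u \notin B_1$ is precisely the hypothesis under which Corollary~\ref{cor:qNhard} asserts hardness), and for $\u = 0$ it is the separate Remark. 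Everything else is a two-line composition of the reductions already proved.
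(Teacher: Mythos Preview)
Your proof is correct and follows exactly the approach the paper indicates: combine the seed hardness from Corollary~\ref{cor:qNhard} (for $\u\in\QQ\setminus B_1$) and Remark~\ref{rem:IShard} (for $\u=0$) with the interpolation reduction of Proposition~\ref{prop:p_interpol} along the line $u=\u$. The paper itself gives no further detail beyond citing these three ingredients, so your write-up is in fact more explicit than the original.
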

This tells us that $P$ is \SP-hard to evaluate at every point in the
left half of Figure~\ref{fig:pqcomplexity} not lying on one of the
seven thick lines (three of which are solid gray ones, two of which
are solid black ones, and two of which are dashed brown ones). Using
the comb and cycle reductions we are able to reveal the hardness of
the interlace polynomial $P$ on the lines $x=-1$ and $x=-2$:
\begin{prop}
\label{prop:PHardness_xm1}
  For $\u\in (\QQ\setminus B_1)\uni\{0\}$ the problem $P(\u,-1)$ is \SP-hard.
\end{prop}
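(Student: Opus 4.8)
The plan is to reduce a point that is already known to be \SP-hard to the point $(\u,-1)$. For $\u\neq 0$ the vehicle is the comb transformation of Proposition~\ref{prop:comb}, and for the exceptional value $\u=0$ it is the cycle transformation of Proposition~\ref{prop:cycles}.

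First I would handle the generic case $\u\in\QQ\setminus B_1$ (so in particular $\u\neq 0$, since $0\in B_1$). The key observation is that the polynomial $p(k,u,x)=(1+x)^k(xu^2+1)-xu^2$ appearing in Proposition~\ref{prop:comb} collapses at $x=-1$: since $(1+(-1))^k=0$ for every positive integer $k$, one gets $p(k,\u,-1)=\u^2$, which is nonzero precisely because $\u\neq 0$. Hence Proposition~\ref{prop:comb} (with, say, $k=1$) yields $P(\u,-1/\u^2)\redm P(\u,-1)$. It then remains to check that $(\u,-1/\u^2)$ lies in the hard region supplied by Proposition~\ref{prop:PHardness}, i.e.\ that $-1/\u^2\notin B_2=\{0,-1,-2\}$: clearly $-1/\u^2\neq 0$, while $-1/\u^2=-1$ forces $\u=\pm1$ and $-1/\u^2=-2$ forces $\u=\pm\beta$, and all four of these values are excluded by $\u\notin B_1=\{\pm1,\pm\beta,0\}$. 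So $P(\u,-1/\u^2)$ is \SP-hard, and therefore so is $P(\u,-1)$.

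The case $\u=0$ has to be treated separately, and this is the one genuinely delicate point: at $\u=0$ one has $p(k,0,-1)=0$, so \emph{no} comb operation moves the second coordinate away from $-1$, and the argument above breaks down. Instead I would invoke the $3$-cycle reduction already recorded in Proposition~\ref{prop:cycles}, namely $P(0,1)\redm P(0,-1)$, together with Remark~\ref{rem:IShard}, which gives that $P(0,1)$ is \SP-hard because $P(G;0,1)$ counts the independent sets of $G$. This settles $\u=0$ and completes the proof. Everything else is routine: evaluating $p(k,\u,-1)=\u^2$ and excluding the two cases $-1/\u^2\in\{-1,-2\}$ are elementary, so the substance of the argument is choosing the comb transformation for the generic case and recognizing that the degenerate value $\u=0$ must instead be reached by the cycle reduction from the independent-set count.
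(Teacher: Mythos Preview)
Your proof is correct and follows essentially the same route as the paper: the comb reduction via Proposition~\ref{prop:comb} (using $p(k,\u,-1)=\u^2$) to reach a point covered by Proposition~\ref{prop:PHardness} for $\u\notin B_1$, and the $3$-cycle reduction of Proposition~\ref{prop:cycles} together with Remark~\ref{rem:IShard} for $\u=0$. Your verification that $-1/\u^2\notin B_2$ exactly when $\u\notin\{\pm1,\pm\beta\}$ is just a cleaner phrasing of the paper's case distinction.
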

\begin{proof}
  For $\u=0$ we use Proposition~\ref{prop:cycles} and
  Remark~\ref{rem:IShard}. If $\u\neq0$, we can use
  Proposition~\ref{prop:comb}, which yields $P(\u,-1/\u^2) \redm
  P(\u,-1)$. For $\u=\pm 1$ this reduces $(\pm 1,-1)$ to itself. For
  $\u=\pm \beta$ this reduces $(\beta, -2)$ to $(\beta,-1)$ and
  $(-\beta, -2)$ to $(-\beta,-1)$. For other $\u$ this gives a
  reduction of some point, which is already known as \SP-hard by
  Proposition~\ref{prop:PHardness}, to $(\u,-1)$.
\end{proof}

\begin{prop}
\label{prop:PHardness_xm2}
For $\u\in (\QQ\setminus B_1) \uni \{0\}$ the problem $P(\u,-2)$ is
\SP-hard.
\end{prop}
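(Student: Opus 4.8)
The plan is to obtain $P(\u,-2)$ as the target of a single point-to-point reduction from a point at which \SP-hardness is \emph{already} established, namely a point of the form $(\u,-4)$. First I would apply Theorem~\ref{thm:p2p_cycles} with $k=4$: for a graph $G$ and the graph $G_4$ obtained by attaching a $4$-cycle to every vertex, $P(G_4;u,x)=p_4(u,x)\,P(G;u,q_4(u,x)/p_4(u,x))$ with $p_4(u,x)=1+3x+x^2+2x^2u^2+x^3u^2$ and $q_4(u,x)=x^2+2x^3u^2+x^4u^2$. Substituting $x=-2$ gives $p_4(u,-2)=1-6+4+8u^2-8u^2=-1$ and $q_4(u,-2)=4-16u^2+16u^2=4$ --- the $u$-dependence cancels in both --- so $q_4(u,-2)/p_4(u,-2)=-4$ for every $u$. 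Hence $P(G;u,-4)=-P(G_4;u,-2)$, which is exactly the reduction $P(\u,-4)\redm P(\u,-2)$ recorded as the second reduction of Proposition~\ref{prop:cycles}; so in the write-up I will simply invoke that proposition.

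Second, I would note that the source point is hard for exactly the right set of $\u$: since $-4\notin B_2=\{0,-1,-2\}$, Proposition~\ref{prop:PHardness} applied with second coordinate $-4$ says that $P(\u,-4)$ is \SP-hard for every $\u\in(\QQ\setminus B_1)\uni\{0\}$, and this already covers the case $\u=0$, so no separate argument for $\u=0$ is needed. Composing this with the reduction from the first paragraph shows that $P(\u,-2)$ is \SP-hard for all $\u\in(\QQ\setminus B_1)\uni\{0\}$, which is the claim.

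Unlike in Proposition~\ref{prop:PHardness_xm1}, where $\u=0$ had to be handled by a separate $3$-cycle reduction and $\u\neq 0$ by the comb reduction, here a single cycle reduction works uniformly, so there is essentially no case analysis. The only thing that has to be checked --- and essentially the only place where the argument could break down --- is that the $4$-cycle reduction does not degenerate at $x=-2$, i.e.\ that $p_4(\u,-2)\neq 0$; but $p_4(\u,-2)=-1$ identically in $\u$, so this never fails. (One could alternatively try an odd-$k$ comb reduction, whose factor at $x=-2$ equals $p(k,\u,-2)=4\u^2-1$, but this vanishes at $\u=\pm 1/2$, so the comb route alone would leave a gap on the line $x=-2$; the $4$-cycle route avoids this.)
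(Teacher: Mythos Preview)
Your proposal is correct and follows exactly the paper's approach: invoke the $4$-cycle reduction of Proposition~\ref{prop:cycles} to get $P(\u,-4)\redm P(\u,-2)$, and then Proposition~\ref{prop:PHardness} (with $-4\notin B_2$) to conclude \SP-hardness of the source point for every $\u\in(\QQ\setminus B_1)\uni\{0\}$. Your additional remarks (the explicit evaluation $p_4(\u,-2)=-1$, $q_4(\u,-2)=4$, and the observation that the comb route would fail at $\u=\pm\tfrac12$) are correct and nicely explain why the cycle reduction is the right tool here, but they are not needed for the proof itself.
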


\begin{proof}
  Use Proposition~\ref{prop:cycles} and
  Proposition~\ref{prop:PHardness}.
\end{proof}

\subsection{Summing up}

First we summarize our knowledge about $P$.
\begin{thm}
\label{thm:p_summary}
Let \betadef.
\begin{enumerate}
\item $P(\u,\x)$ is computable in polynomial time on the lines $\u=1$
  and $\x=0$.
\item For $(\u,\x) \in (\QQ\setminus \{-1, -\beta, \beta, 1\})
  \times (\QQ\setminus \{0\})$ the problem
  $P(\u,\x)$ is \SP-hard.
\end{enumerate}
\end{thm}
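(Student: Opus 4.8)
The plan is to assemble Theorem~\ref{thm:p_summary} from the pieces already collected in this section. Part~(1) is immediate from Remark~\ref{rem:Peasy}, which gives $P(G;\u,0)=1$ and $P(G;1,\x)=(1+\x)^{|V|}$, both computable in polynomial time; nothing more is needed there. The content is part~(2), and the strategy is a case distinction on whether $\x$ lies in the ``bad'' set $B_2=\{0,-1,-2\}$ that was excluded in Proposition~\ref{prop:p_interpol}. Since $\x=0$ is already ruled out by hypothesis, the remaining cases are $\x\in\{-1,-2\}$ and $\x\notin B_2$.

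For $\x\notin B_2$: given $\u\in\QQ\setminus\{-1,-\beta,\beta,1\}$, I would like to invoke Proposition~\ref{prop:PHardness}, which asserts $P(\u,\x)$ is \SP-hard whenever $\u\in(\QQ\setminus B_1)\uni\{0\}$ and $\x\in\QQ\setminus B_2$. Here $B_1=\{\pm1,\pm\beta,0\}$, so $(\QQ\setminus B_1)\uni\{0\}=\QQ\setminus\{-1,-\beta,\beta,1\}$, which is exactly the set of admissible $\u$ in the theorem. Hence Proposition~\ref{prop:PHardness} applies verbatim and settles this case. For $\x=-1$: the same set of admissible $\u$ matches the hypothesis of Proposition~\ref{prop:PHardness_xm1}, namely $\u\in(\QQ\setminus B_1)\uni\{0\}$, which yields \SP-hardness of $P(\u,-1)$. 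For $\x=-2$: likewise Proposition~\ref{prop:PHardness_xm2} gives \SP-hardness of $P(\u,-2)$ for $\u\in(\QQ\setminus B_1)\uni\{0\}$. So in all three subcases the theorem's hypothesis on $\u$ is precisely strong enough.

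The only genuine bookkeeping obstacle is to make sure the excluded-$\u$ set in the theorem statement lines up with the union of the excluded sets in the three propositions. Propositions~\ref{prop:PHardness}, \ref{prop:PHardness_xm1}, and \ref{prop:PHardness_xm2} all quantify over the \emph{same} $\u$-set $(\QQ\setminus B_1)\uni\{0\}=\QQ\setminus\{-1,-\beta,\beta,1\}$, so the union is that single set and no additional points need to be removed; the theorem's $\QQ\setminus\{-1,-\beta,\beta,1\}$ is correct. I would also remark that $\beta=1/\sqrt2$, and that the four excluded values of $\u$ correspond, via Lemma~\ref{lem:p_q}, to the gray and black lines in Figure~\ref{fig:pqcomplexity}, but this is just orientation, not part of the proof. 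I do not expect any real difficulty here: the theorem is purely a packaging of the preceding propositions, and the proof is essentially ``combine parts and observe the index sets agree.''

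\begin{proof}
Part~(1) is Remark~\ref{rem:Peasy}. For part~(2), let $\u\in\QQ\setminus\{-1,-\beta,\beta,1\}=(\QQ\setminus B_1)\uni\{0\}$ and $\x\in\QQ\setminus\{0\}$. If $\x\notin B_2$, then $P(\u,\x)$ is \SP-hard by Proposition~\ref{prop:PHardness}. If $\x=-1$, then $P(\u,-1)$ is \SP-hard by Proposition~\ref{prop:PHardness_xm1}. If $\x=-2$, then $P(\u,-2)$ is \SP-hard by Proposition~\ref{prop:PHardness_xm2}. As $B_2=\{0,-1,-2\}$ and $\x\neq0$, these cases are exhaustive.
\end{proof}
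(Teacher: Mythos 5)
Your proof is correct and follows essentially the same route as the paper: part~(1) is Remark~\ref{rem:Peasy}, and part~(2) is a case split on whether $\x\in\QQ\setminus B_2$, $\x=-1$, or $\x=-2$, citing Propositions~\ref{prop:PHardness}, \ref{prop:PHardness_xm1}, and~\ref{prop:PHardness_xm2} respectively, after observing that $(\QQ\setminus B_1)\cup\{0\}=\QQ\setminus\{-1,-\beta,\beta,1\}$. The only deviation is that the paper's own proof additionally invokes Corollary~\ref{cor:I_complexity} for the single point $(\u,\x)=(0,-1)$, which is a forward reference (the corollary is stated after the theorem and justified by an external citation to \cite{averbouch_makowsky}); your version sidesteps this entirely, since Proposition~\ref{prop:PHardness_xm1} already includes $\u=0$ in its hypothesis and its proof handles that subcase via Proposition~\ref{prop:cycles} and Remark~\ref{rem:IShard}, so no appeal to the corollary is needed. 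This makes your packaging marginally cleaner and self-contained within the section.
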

\begin{proof}
  Summary of Remark~\ref{rem:Peasy}, Proposition~\ref{prop:PHardness},
  Proposition~\ref{prop:PHardness_xm1},
  Proposition~\ref{prop:PHardness_xm2}. The hardness of $P(0,-1)$
  follows from Corollary~\ref{cor:I_complexity}.
\end{proof}

In particular we obtain the following corollary about the complexity
of the independent set polynomial, which also follows from
\cite{averbouch_makowsky}.
\begin{cor}
\label{cor:I_complexity}
Evaluating the independent set polynomial
$I(\lambda)=P(0,\lambda)=q(1,1+\lambda)$ is \SP-hard at all $\lambda
\in \QQ$ except at $\lambda = 0$, where it is computable in polynomial
time.
\end{cor}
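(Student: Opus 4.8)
The plan is to read Corollary~\ref{cor:I_complexity} off Theorem~\ref{thm:p_summary} almost verbatim. Recall $I(\lambda) = \sum_{j\ge 0} i(G;j)\lambda^j$ with $i(G;j)$ the number of independent sets of size $j$. The identity $I(\lambda) = P(G;0,\lambda)$ is immediate from the definition of $P$: for $u=0$ we have $u^{\rk(G[A])}=1$ precisely when $\rk(G[A])=0$, i.e.\ when $G[A]$ has no edges (and no self loops), which is exactly the condition that $A$ is an independent set; hence $P(G;0,\lambda) = \sum_{A \text{ independent}} \lambda^{|A|} = \sum_j i(G;j)\lambda^j = I(G;\lambda)$. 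The identity $I(\lambda)=q(1,1+\lambda)$ then follows from Lemma~\ref{lem:p_q}, since $P(G;u,x)=q(G;ux+1,x+1)$ specializes at $u=0$, $x=\lambda$ to $q(G;1,1+\lambda)$.

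For the complexity statement, first I would dispose of the easy point: at $\lambda=0$, Remark~\ref{rem:Peasy} gives $P(G;0,0)=1$ for every graph $G$ (equivalently, the only independent set counted is the empty one), so $I(0)$ is trivially polynomial-time computable. For the hardness, fix $\lambda \in \QQ \setminus \{0\}$. We want \SP-hardness of $P(0,\lambda)$. Here $\u=0$, so $\u \in (\QQ \setminus B_1) \uni \{0\}$ automatically (the point $0$ is explicitly adjoined in all the hardness propositions precisely to cover the independent set polynomial), and we must handle the three cases according to where $\lambda$ sits relative to the exceptional set $B_2 = \{0,-1,-2\}$ of Theorem~\ref{thm:p_summary}. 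If $\lambda \notin \{0,-1,-2\}$, then $(0,\lambda) \in ((\QQ\setminus B_1)\uni\{0\}) \times (\QQ\setminus B_2)$ and Proposition~\ref{prop:PHardness} gives that $P(0,\lambda)$ is \SP-hard. If $\lambda=-1$, Proposition~\ref{prop:PHardness_xm1} (applied with $\u=0$, whose proof invokes Proposition~\ref{prop:cycles} and Remark~\ref{rem:IShard}) gives \SP-hardness of $P(0,-1)$. If $\lambda=-2$, Proposition~\ref{prop:PHardness_xm2} (applied with $\u=0$) gives \SP-hardness of $P(0,-2)$. This exhausts all $\lambda \in \QQ\setminus\{0\}$, so $I(\lambda)$ is \SP-hard at every such point.

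There is essentially no obstacle here: the corollary is a straightforward restriction of Theorem~\ref{thm:p_summary} to the vertical line $\u=0$ in the left half of Figure~\ref{fig:pqcomplexity}, combined with the two elementary polynomial identities above. The only point requiring a moment's care is that the statement of Theorem~\ref{thm:p_summary} phrases the hardness region as $(\QQ\setminus\{-1,-\beta,\beta,1\}) \times (\QQ\setminus\{0\})$, which does \emph{not} visibly contain the line $\u=0$ for all $\x$; one should note that the actual content of that theorem, as assembled from Propositions~\ref{prop:PHardness}, \ref{prop:PHardness_xm1}, \ref{prop:PHardness_xm2} and the remark that $P(0,-1)$ is hard, does cover $\u=0$ for every $\x \neq 0$ — the value $\u=0$ was deliberately included throughout via the ``$\uni\{0\}$'' in those propositions. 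So the cleanest writeup cites those propositions directly (for $\u=0$) rather than the summarizing Theorem~\ref{thm:p_summary}, splitting into the cases $\lambda \notin\{-1,-2\}$, $\lambda=-1$, $\lambda=-2$ as above.
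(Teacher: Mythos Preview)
Your proposal is correct and essentially matches the paper, which presents the corollary as an immediate consequence of Theorem~\ref{thm:p_summary}. One minor correction: the hardness region $(\QQ\setminus\{-1,-\beta,\beta,1\}) \times (\QQ\setminus\{0\})$ in Theorem~\ref{thm:p_summary} \emph{does} visibly contain the line $\u=0$ (since $0 \notin \{-1,-\beta,\beta,1\}$), so your case split through Propositions~\ref{prop:PHardness}, \ref{prop:PHardness_xm1}, \ref{prop:PHardness_xm2} is not strictly necessary---though it has the virtue of sidestepping the paper's own slight circularity, where the proof of Theorem~\ref{thm:p_summary} cites Corollary~\ref{cor:I_complexity} for $P(0,-1)$ despite Proposition~\ref{prop:PHardness_xm1} already covering that point.
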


Now we turn to the complexity of $q$, see also the right half of
Figure~\ref{fig:pqcomplexity}.
\begin{thm}
\label{thm:q_summary}
  The two-variable interlace polynomial $q$ is \SP-hard to evaluate
  almost everywhere. In particular, we have:
\begin{enumerate}
\item \label{q_summary_polytime} $q(\x,\y)$ is computable in
  polynomial time on the line $\x=\y$.
\item \label{q_summary_y1} Let $\x\in \QQ\setminus \{1\}$ and $x$ be an
  indeterminate. Then $q(\x,1)$ is as hard as computing the whole
  polynomial $q(x,1)$.
\item \label{q_summary_SPhard} $q(\x,\y)$ is \SP-hard for all
  \begin{align*}(\x,\y) \in \{(\x,\y)\in \QQ^2\ |\ & \y \neq
    \pm(\x-1)+1 \ \text{and} \ \y\neq\pm \sqrt{2}(\x-1)+1\ \text{and}
    \ \y\neq 1 \}.
\end{align*}
\end{enumerate}
\end{thm}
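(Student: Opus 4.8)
The plan is to derive all three parts of Theorem~\ref{thm:q_summary} by pulling back the corresponding facts about~$P$ through the change of variables of Lemma~\ref{lem:p_q}, namely $q(G;x,y)=P(G;\frac{x-1}{y-1},y-1)$. Part~\ref{q_summary_polytime} is immediate: the line $\x=\y$ in the $q$-plane corresponds (for $\y\neq1$) to $\u=\frac{\x-1}{\y-1}=1$, and $q(\x,\x)=q(\x,\x)$ for $\x=\y=1$ is the single point $(1,1)$ which also maps into the easy locus; either way Remark~\ref{rem:Peasy} gives polynomial-time computability, so I would just note that $q(G;\x,\x)=x^{|V|}$ directly from \eqref{eq:q_def} (every term with $A\neq\emptyset$ vanishes when $x=y$ except through the rank bookkeeping — actually one checks $q(G;\x,\x)=\sum_A (\x-1)^{|A|}=\x^{|V|}$), making the claim self-contained.

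For part~\ref{q_summary_y1}, the line $\y=1$ is exactly where the substitution degenerates, so I would argue directly. Setting $y=1$ in \eqref{eq:q_def} kills every term with $|A|>\rk(G[A])$, leaving $q(G;x,1)=\sum_{A:\,G[A]\text{ nonsingular}}(x-1)^{\rk(G[A])} = \sum_{A:\,\rk(G[A])=|A|}(x-1)^{|A|}$, a polynomial in $x$ of degree $\le|V|$. The claim ``$q(\x,1)$ is as hard as computing the whole polynomial $q(x,1)$'' then follows by the same cloning-plus-interpolation device as in Proposition~\ref{prop:p_interpol}: apply \eqref{eq:Qtrans} with $y=1$, which gives $q(G_k;x,1)=q(G;(x-1)k+1,1)$ since $\frac{y^k-1}{y-1}\to k$ and $y^k\to1$ as $y\to1$; thus an oracle for $q(\x,1)$ evaluated on $G_1,\dots,G_{n+1}$ yields $q(G;x,1)$ at the $n+1$ distinct points $(\x-1)i+1$ (distinct provided $\x\neq1$), and Lagrange interpolation recovers the polynomial. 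Conversely evaluating the polynomial at $\x$ is trivial, giving the two-way equivalence.

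Part~\ref{q_summary_SPhard} is the main content and where the bookkeeping lives. Fix $(\x,\y)$ in the stated region; since $\y\neq1$ we may set $\u=\frac{\x-1}{\y-1}$ and $\x'=\y-1$, and by the Corollary after Lemma~\ref{lem:p_q} we have $q(\x,\y)\redm P(\u,\x')$, so it suffices to show $P(\u,\x')$ is \SP-hard. I would invoke Theorem~\ref{thm:p_summary}(2): $P(\u,\x')$ is \SP-hard whenever $\u\notin\{-1,-\beta,\beta,1\}$ and $\x'\neq0$. Now $\x'=\y-1\neq0$ is exactly $\y\neq1$, already assumed. The condition $\u\neq1$ translates to $\frac{\x-1}{\y-1}\neq1$, i.e.\ $\y\neq\x$, which is the case $\y=+(\x-1)+1$ excluded in the statement. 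The condition $\u\neq-1$ is $\x-1\neq-(\y-1)$, i.e.\ $\y\neq-(\x-1)+1$ — excluded. The condition $\u\neq\pm\beta=\pm\frac1{\sqrt2}$ is $\frac{\x-1}{\y-1}\neq\pm\frac1{\sqrt2}$, i.e.\ $\y-1\neq\pm\sqrt2(\x-1)$, i.e.\ $\y\neq\pm\sqrt2(\x-1)+1$ — excluded. So for every $(\x,\y)$ in the region, $\u$ avoids the four bad values and $\x'\neq0$, whence $P(\u,\x')$ is \SP-hard and therefore so is $q(\x,\y)$.

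The routine-but-delicate part — and the only place I anticipate any friction — is making sure the dictionary between the four ``bad lines'' in the $q$-plane and the four bad points $\{-1,-\beta,\beta,1\}$ for $\u$ is airtight, including the limiting behavior on $\y=1$ (handled separately in part~\ref{q_summary_y1}) and the consistency of the two reductions $q\to P$ and $P\to q$ at the edge cases; I would also double-check that Theorem~\ref{thm:p_summary}(2) genuinely covers all $\u\in\QQ\setminus\{-1,-\beta,\beta,1\}$ with no hidden exclusion of $\u=0$ (it does: $\u=0$ with $\x'\neq0$ is \SP-hard there), since the line $\u=0$ corresponds to $\x=1$, the independent-set line, and we want hardness there too. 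Beyond that the proof is simply assembling Theorems~\ref{thm:p_summary} and the substitution lemma.
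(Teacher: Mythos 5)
Your proposal is correct and follows essentially the same route as the paper's (sketched) proof: parts (1) and (3) are obtained by pulling back Remark~\ref{rem:Peasy} and Theorem~\ref{thm:p_summary} through the substitution of Lemma~\ref{lem:p_q}, and part (2) uses the specialization of \eqref{eq:Qtrans} at $y=1$ together with cloning-and-interpolation exactly as in Proposition~\ref{prop:p_interpol}. You supply a bit more detail than the paper does --- the self-contained computation $q(G;\x,\x)=\x^{|V|}$, the explicit translation of the four excluded $\u$-values into the four excluded lines in the $(\x,\y)$-plane, and the sanity check that $\u=0$ is not excluded by Theorem~\ref{thm:p_summary}(2) --- but these are the steps the paper's sketch leaves implicit rather than a different argument.
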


\begin{proof}[Proof of Theorem~\ref{thm:q_summary} (Sketch)]
  (\ref{q_summary_polytime}) and (\ref{q_summary_SPhard}) follow from
  Remark~\ref{rem:Peasy} and Theorem~\ref{thm:p_summary} using
  Lemma~\ref{lem:p_q}. For $\x\neq 1$, \eqref{eq:Qtrans} gives
  $q(G_k;\x,1)=q(G;k(\x-1)+1,1)$, which yields enough points for
  interpolation in the same way as in
  Proposition~\ref{prop:p_interpol} using $k=1, 2, 3, \ldots$ This
  proves (\ref{q_summary_y1}).
\end{proof}
Theorem~\ref{thm:q_summary} implies
\begin{cor}
\label{cor:qR_complexity}
Let \betadef. Evaluating the vertex-rank interlace polynomial
$q_R(G;x)$ is \SP-hard at all $\x\in \QQ$ except at $\x=0, 1-\beta,
1+\beta$ (complexity open) and $\x=2$ (computable in polynomial time).
\qed
\end{cor}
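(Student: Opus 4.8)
The plan is to deduce Corollary~\ref{cor:qR_complexity} directly from Theorem~\ref{thm:q_summary} by specializing the two-variable interlace polynomial to the line $y=2$. Recall that $q_R(G;x)=q(G;x,2)$, so evaluating $q_R$ at a point $\x\in\QQ$ is literally the problem $q(\x,2)$. Since $2\neq 1$, the excluded line $\y=1$ in part~(\ref{q_summary_SPhard}) of Theorem~\ref{thm:q_summary} never interferes, and we only need to intersect the line $\y=2$ with the two families of excluded lines $\y=\pm(\x-1)+1$ and $\y=\pm\sqrt2(\x-1)+1$.

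First I would carry out the elementary intersection computation. Setting $\y=2$ in $\y=(\x-1)+1$ gives $\x=2$; in $\y=-(\x-1)+1$ gives $\x=0$; in $\y=\sqrt2(\x-1)+1$ gives $\x=1+1/\sqrt2=1+\beta$; and in $\y=-\sqrt2(\x-1)+1$ gives $\x=1-1/\sqrt2=1-\beta$. Hence for every $\x\in\QQ\setminus\{0,1-\beta,1+\beta,2\}$ the point $(\x,2)$ avoids all the excluded lines, so by Theorem~\ref{thm:q_summary}(\ref{q_summary_SPhard}) the problem $q(\x,2)=q_R(\x)$ is \SP-hard. This establishes the hardness claim for all the asserted values.

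Next I would handle the two exceptional easy/trivial cases. For $\x=2$ we have $q_R(G;2)=q(G;2,2)$, and from \eqref{eq:q_def} every term with $|A|>0$ contains a factor $(x-1)^{\rk}(y-1)^{|A|-\rk}$ which at $(2,2)$ equals $1$ only when... more carefully, one reads off from Remark~\ref{rem:Peasy} via Lemma~\ref{lem:p_q} that $q(2,2)$ corresponds to $P(\u,0)$ for the appropriate $\u$ (indeed $P(G;\u,0)=1$), so it is polynomial-time computable; alternatively $q(G;2,2)=\sum_{A}1^{\rk}1^{|A|-\rk}=2^{|V|}$. The values $\x=0$ and $\x=1\pm\beta$ lie on the gray lines of Figure~\ref{fig:pqcomplexity}, where the complexity of $q$ itself is left open in Theorem~\ref{thm:q_summary}, so nothing more can be said here and the corollary correctly records them as ``complexity open''.

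I do not expect any real obstacle: the corollary is a pure slice of Theorem~\ref{thm:q_summary}, and the only thing to be careful about is getting the four intersection points exactly right (in particular noting that $1+1/\sqrt2$ and $1+\beta$ are the same number with $\beta=1/\sqrt2$, matching the notation fixed before Theorem~\ref{thm:tutteEasyXX}) and double-checking that $\y=2$ does not accidentally coincide with the forbidden $\y=1$. This is why the proof in the excerpt is given as a one-line \qed-style deduction.
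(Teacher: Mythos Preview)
Your approach is exactly the paper's: the corollary is stated with a bare \qed\ because it is the restriction of Theorem~\ref{thm:q_summary} to the line $\y=2$, and your intersection computation with the four excluded lines is correct. One small slip: $q(2,2)$ corresponds via Lemma~\ref{lem:p_q} to $P(1,1)$ (the $u=1$ line of Remark~\ref{rem:Peasy}), not to $P(\u,0)$; your direct computation $q(G;2,2)=2^{|V|}$ is the right one, and in any case Theorem~\ref{thm:q_summary}(\ref{q_summary_polytime}) already handles $(2,2)$ since it lies on $\x=\y$.
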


\section{Inapproximability of the Independent Set Polynomial}
\label{sec:inapprox}

Provided we can evaluate the independent set polynomial at some fixed
point, vertex cloning (adding combs, resp.) allows us to evaluate it
at very large points. In this section we exploit this to prove that
the independent set polynomial is hard to approximate. Similar results
are shown in \cite{tutte_approx} for the Tutte polynomial.

\begin{defi}
  Let $\lambda \in \QQ$ and $\eps>0$. By a randomized
  $2^{n^{1-\eps}}$-approximation algorithm for $I(\lambda)$ we mean a
  randomized algorithm, that, given a graph $G$ with $n$ nodes, runs
  in time polynomial in $n$ and returns $\II(G;\lambda)\in \QQ$ such
  that
  \begin{equation*}
    \Pr[2^{-n^{1-\eps}}I(G;\lambda) \leq \II(G;\lambda) \leq
    2^{n^{1-\eps}}I(G;\lambda)] \geq \frac{3}{4}.
  \end{equation*}
\end{defi}

In \cite{tutte_approx}, (non)approximability in the weaker sense of
(not) admitting an \FPRAS\ is considered.
\begin{defi}
  Let $\lambda \in \QQ$. A fully polynomial randomized approximation
  scheme (\FPRAS) for $I(\lambda)$ is a randomized algorithm, that
  given a graph $G$ with $n$ nodes and an error tolerance $\eps, 0<
  \eps<1$, runs in time polynomial in $n$ and $1/\eps$ and returns
  $\II(G;\lambda) \in \QQ$ such that
  \begin{equation*}
    \Pr[2^{-\eps}I(G;\lambda) \leq \II(G;\lambda)\leq 2^{\eps}I(G;\lambda)]\geq \frac{3}{4}.
  \end{equation*}
\end{defi}

\begin{lem}
\label{lem:Iinapprox}
For every $\lambda \in \QQ$, $0\neq |1+\lambda|\neq 1$, and every
$\eps$, $0<\eps<1$, there is no randomized polynomial time
$2^{n^{1-\eps}}$-approximation algorithm for $I(\lambda)$ unless
$\RP=\NP$.
\end{lem}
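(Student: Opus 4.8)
The plan is to prove the contrapositive: assuming a randomized polynomial time $2^{n^{1-\eps}}$-approximation algorithm for $I(\lambda)$ exists, I will derive $\RP=\NP$ by giving a $\BPP$ algorithm for the classical $\NP$-complete problem of deciding, given a graph $G$ and an integer $s$, whether the independence number $\alpha(G)$ is at least $s$. The basic idea is that the graph transformations of Section~\ref{sec:graph_trans} let us evaluate $I$ at a point whose absolute value is so large that the value $I(G;x)=\sum_{j}i(G;j)x^{j}$ is dominated by its leading monomial $i(G;\alpha(G))\,x^{\alpha(G)}$: since $i(G;j)\le 2^{n}$ for every $j$, as soon as $|x|\ge 2^{n+2}n$ we get $\log_{2}|I(G;x)|=\alpha(G)\log_{2}|x|+O(n)$, so the binary logarithm of this value essentially reads off $\alpha(G)$. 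A subexponential-factor approximation disturbs this logarithm only additively, so if we push the point far enough out the perturbation becomes negligible.

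Concretely, I distinguish two cases according to the (fixed) hypothesis $|1+\lambda|\notin\{0,1\}$. If $|1+\lambda|>1$, cloning does the job: by Theorem~\ref{thm:p2p} at $u=0$, and using $I(G;x)=P(G;0,x)$, cloning every vertex of $G$ exactly $k-1$ times yields a graph $G_{k}$ with $I(G_{k};\lambda)=I\bigl(G;(1+\lambda)^{k}-1\bigr)$, whose argument has absolute value $|1+\lambda|^{k}-1\to\infty$. If $0<|1+\lambda|<1$ (which forces $\lambda<0$), cloning instead pushes the argument towards $-1$, so I use combs: by Theorem~\ref{thm:p2p_comb} at $u=0$ one has $p(k,0,x)=(1+x)^{k}$, hence performing a $k$-comb at every vertex gives $I(G_{k};\lambda)=(1+\lambda)^{k|V(G)|}\,I\bigl(G;\lambda/(1+\lambda)^{k}\bigr)$, and the new argument $\lambda/(1+\lambda)^{k}$ has absolute value $|\lambda|/|1+\lambda|^{k}\to\infty$. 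In either case $|V(G_{k})|=O(nk)$ and, by the paragraph above, $\log_{2}|I(G_{k};\lambda)|$ equals an \emph{explicitly known} affine function of $k$ plus $\alpha(G)$ times a fixed \emph{nonzero} multiple of $k$ (a constant multiple of $\log_{2}|1+\lambda|$, nonzero precisely because $|1+\lambda|\neq1$), plus an $O(n)$ error. I then set $k:=n^{c}$ with $c:=\lceil 1/\eps\rceil$, so that the approximation-error exponent $|V(G_{k})|^{1-\eps}=O\bigl(n^{(c+1)(1-\eps)}\bigr)$ is $o(k)$ — this is exactly the inequality $(c+1)(1-\eps)<c$, i.e.\ $c>\tfrac1\eps-1$. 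Running the approximation algorithm on $G_{k}$ returns, with probability $\ge\tfrac34$, a value $\widetilde I$ with $\log_{2}\widetilde I=\log_{2}I(G_{k};\lambda)\pm|V(G_{k})|^{1-\eps}$; subtracting the known affine part, dividing by $k$, rescaling, and rounding to the nearest integer recovers $\alpha(G)$ exactly for all $n$ above a constant threshold (small $n$ are settled by brute force). This decides $\alpha(G)\ge s$, so $\NP\subseteq\BPP$, which forces $\NP=\RP$ (a $\BPP$ decider for $\mathrm{SAT}$ together with downward self-reducibility gives an $\RP$ algorithm for $\mathrm{SAT}$).

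The step I expect to be most delicate is a sign issue: the approximation guarantee is vacuous unless $I(G_{k};\lambda)>0$, but in the comb case the leading monomial of $I\bigl(G;\lambda/(1+\lambda)^{k}\bigr)$ carries the sign $(-1)^{\alpha(G)}$. I remove this by a preprocessing step — replace $G$ by $G_{2}$, i.e.\ clone every vertex once (the $k=2$ case of Theorem~\ref{thm:p2p}), so that $\alpha(G_{2})=2\alpha(G)$ is even, and run the whole argument on $G_{2}$ in place of $G$ (with $k$ even); now the leading exponent is even, all inflated values are positive, and $\alpha(G)$ is just half of the recovered $\alpha(G_{2})$. What remains is routine: bounding the non-leading terms uniformly (they are smaller than the leading one by a factor decaying geometrically in $k$, using $i(G;j)\le 2^{n}$ and $|1+\lambda|\ne 1$), checking $(c+1)(1-\eps)<c$, and boosting the success probability by repetition and majority. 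The excluded values are consistent: $I(\lambda)=P(0,\lambda)=q(1,1+\lambda)$ is omitted from the hypothesis exactly when $|1+\lambda|\in\{0,1\}$, i.e.\ $\lambda\in\{-2,-1,0\}$, and indeed $I(G;0)\equiv1$ is trivial.
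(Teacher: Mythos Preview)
Your proof is correct and follows essentially the same strategy as the paper: split on whether $|1+\lambda|>1$ or $|1+\lambda|<1$, apply vertex cloning (Theorem~\ref{thm:p2p}) or combs (Theorem~\ref{thm:p2p_comb}) respectively to push the evaluation point $\xi$ out to large absolute value, and then recover $\alpha(G)$ from the fact that $I(G;\xi)$ is dominated by its leading monomial. The only differences are in execution: the paper recovers $\alpha(G)$ by testing each candidate $\tilde c\in\{1,\dots,n\}$ against the window $2^{-(nl)^{1-\eps}-1}\le \tilde I(G;\xi)/\xi^{\tilde c}\le 2^{(nl)^{1-\eps}+n+1}$ rather than by taking logarithms and rounding, and it does not explicitly address the positivity issue that you handle via the $G\to G_2$ preprocessing (so your version is in fact a bit more careful on that point).
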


\begin{thm}
\label{thm:I_inapprox}
For every $\lambda \in \QQ \setminus \{0\}$ and every $\eps$,
$0<\eps<1$, there is no randomized polynomial time
$2^{n^{1-\eps}}$-approximation algorithm (and thus also no \FPRAS) for
$I(\lambda)$ unless $\RP=\NP$.
\end{thm}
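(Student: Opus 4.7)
The plan is to extend Lemma~\ref{lem:Iinapprox} by handling the two remaining values: $\lambda=-1$ (where $1+\lambda=0$) and $\lambda=-2$ (where $|1+\lambda|=1$). I will reduce the approximation of $I(-1)$ to that of $I(1)$ and the approximation of $I(-2)$ to that of $I(-4)$; both target points are in the scope of Lemma~\ref{lem:Iinapprox}, since $|1+1|=2$ and $|1+(-4)|=3$. The reduction is provided by the cycle transformation of Theorem~\ref{thm:p2p_cycles} specialised to $u=0$.

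At $u=0$ one has $p_3(0,x)=1+2x$, $q_3(0,x)=x$, $p_4(0,x)=1+3x+x^2$, $q_4(0,x)=x^2$, and at the two relevant arguments this evaluates to $p_3(0,-1)=q_3(0,-1)=-1$ and $p_4(0,-2)=-1$, $q_4(0,-2)=4$. Reading the cycle identity in its natural per-vertex form (as it is derived in the proof of Theorem~\ref{thm:p2p_cycles}, contributing one factor of $p_k$ per vertex of $G$), this yields
\[
P(G_3;0,-1)=(-1)^{|V(G)|}P(G;0,1)\qquad\text{and}\qquad P(G_4;0,-2)=(-1)^{|V(G)|}P(G;0,-4),
\]
so $|I(G_3;-1)|=|I(G;1)|$ on $3|V(G)|$ vertices and $|I(G_4;-2)|=|I(G;-4)|$ on $4|V(G)|$ vertices. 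Both extended graphs are constructible in polynomial time.

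Now the argument becomes routine: suppose $\aA$ is a randomized polynomial time $2^{n^{1-\eps}}$-approximation algorithm for $I(\lambda)$ with $\lambda\in\{-1,-2\}$, and fix any $\eps'\in(0,\eps)$. On input $G$ with $n$ vertices, build the helper graph $G_3$ or $G_4$, run $\aA$ on it, and multiply the output by the known sign $(-1)^{|V(G)|}$ to obtain an estimate of $I(G;1)$ or $I(G;-4)$. Because the helper graph has $cn$ vertices with $c\in\{3,4\}$, the approximation factor $2^{(cn)^{1-\eps}}=2^{c^{1-\eps}n^{1-\eps}}$ is bounded by $2^{n^{1-\eps'}}$ for every sufficiently large $n$, while the finitely many bounded-size instances are settled by brute force. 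Lemma~\ref{lem:Iinapprox}, applied at $\lambda'=1$ or $\lambda'=-4$, then forces $\RP=\NP$. The main thing to be careful about is the sign bookkeeping induced by the $(-1)^{|V(G)|}$ factor, since the $2^{\pm n^{1-\eps}}$-bracketing in the definition implicitly fixes the sign of the quantity being approximated; this is however a purely mechanical check once the two per-vertex identities above are in place.
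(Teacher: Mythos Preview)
Your proof is correct and follows essentially the same strategy as the paper: cover $\lambda\in\{-1,-2\}$ by transporting an assumed approximation algorithm, via one of the graph transformations of Section~\ref{sec:graph_trans}, to a point already handled by Lemma~\ref{lem:Iinapprox}. For $\lambda=-1$ you and the paper both invoke Theorem~\ref{thm:p2p_cycles} (the $3$-cycle identity $P(G_3;0,-1)=(-1)^{|V(G)|}P(G;0,1)$). For $\lambda=-2$ the paper instead uses the comb identity~\eqref{eq:combtrans}, which at $u=0$, $x=-2$, $k=1$ gives $I(G_1;-2)=(-1)^{|V(G)|}I(G;2)$ and hence a reduction to $I(2)$; you use the $4$-cycle identity to reduce to $I(-4)$. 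Both choices land at a point with $0\neq|1+\lambda'|\neq 1$, so Lemma~\ref{lem:Iinapprox} applies either way, and the remaining bookkeeping (the $(-1)^{|V(G)|}$ sign, and absorbing the constant blow-up $c^{1-\eps}$ into a slightly smaller exponent $\eps'$) is identical in spirit.
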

\begin{proof}
  Lemma~\ref{lem:Iinapprox} gives the inapproximability at $\lambda
  \in \QQ\setminus \{-2,-1,0\}$. By \eqref{eq:combtrans} we could turn
  an approximation algorithm for $I(-2)$ into an approximation
  algorithm for $I(2)$ which would imply $\RP=\NP$ by
  Lemma~\ref{lem:Iinapprox}. For $I(-1)$ we use
  Theorem~\ref{thm:p2p_cycles}.
\end{proof}

\begin{proof}[Proof of Lemma~\ref{lem:Iinapprox}]
  Fix $\lambda\in \QQ, 0\neq |1+\lambda|\neq1$, and $\eps$,
  $0<\eps<1$.  Assume we have a randomized
  $2^{n^{1-\eps}}$-approximation algorithm $\aA$ for $I(\lambda)$.
  Given a graph $G$, Theorem~\ref{thm:p2p} and
  Theorem~\ref{thm:p2p_comb}, resp.,\ will allow us to evaluate the
  independent set polynomial at a point $\x$ with $|\x|$ that large,
  that an approximation of $I(G;\x)$ can be used to recover the degree
  of $I(G;x)$, which is the size of a maximum independent set of $G$.
  As computing this number is \NP-hard, a randomized
  $2^{n^{1-\eps}}$-approximation algorithm for $I(G;\lambda)$ would
  yield an \RP-algorithm for an \NP-hard problem, which implies
  $\RP=\NP$.

  Let $G=(V,E)$ be a graph with $|V|=n$. We distinguish two cases. If
  $|1+\lambda|>1$, we choose a positive integer $l$ such that
  $(nl)^{1-\eps}\geq n^2$ and with $\x:=(1+\lambda)^l-1$ we have
  \begin{equation}
    \label{eq:large_point}
    |\x|>  2^{2(nl)^{1-\eps}+n+2}.
  \end{equation}
  As $\lambda$ and $\eps$ are constant, this can be achieved by
  choosing $l=\poly(n)$. If $0<|1+\lambda|<1$, we choose a positive
  integer $l$ such that with $\x:=\frac{\lambda}{(1+\lambda)^l}$
  \eqref{eq:large_point} holds.  By Theorem~\ref{thm:p2p}
  (Theorem~\ref{thm:p2p_comb}, resp.)  we have
  $I(G;\x)=I(G_l;\lambda)$ ($I(G;\x)=(1+\lambda)^{-l|V|}
  I(G_l;\lambda)$, resp.). Algorithm $\aA$ returns on input $G_l$
  within time $\poly(nl)=\poly(n)$ an approximation
  $\II(G_l;\lambda)$, such that with $\II(G;\x):=\II(G_l;\lambda)$
  ($\II(G;\x):=\frac{\II(G_l;\lambda)}{(1+\lambda)^{l|V|}}$, resp.) we
  have
  \begin{equation}
    \label{eq:approx}
    2^{-(nl)^{1-\eps}}I(G;\x)\leq \II(G;\x)\leq 2^{(nl)^{1-\eps}} I(G;\x)
  \end{equation}
  with high probability.

  Let $c$ be the size of a maximum independent set of $G$, and let $N$
  be the number of independent sets of maximum size. We have
  \begin{align*}
    I(G;x) = Nx^c + \sum_{0\leq j\leq c-1}i(G;j)x^j
  \end{align*}
  and thus
  \begin{equation}
  \begin{split}
    \Big|\frac{I(G;\x)}{\x^c} - N\Big| &\leq \sum_{0\leq j\leq c-1} i(G;j)|\x|^{j-c} \\
    & \leq c 2^n|\x|^{-1} \leq 2^{\log n +n} |\x|^{-1}\overset{\eqref{eq:large_point}}{<}\frac{1}{2}.
    \label{eq:N_estimate}
  \end{split}
  \end{equation}
  If we could evaluate $I(G;\x)$ exactly, we could try all $c\in \{1,
  \ldots, n\}$ to find the one for which $\frac{I(G;\x)}{\x^c}$ is a
  good estimation for $N$, $1\leq N\leq 2^n$. This $c$ is unique as
  $|\x|>2^{n^2}$. The following calculation shows that this is also
  possible using the approximation algorithm~$\aA$.

  Using $\aA$ we compute $\NN(\cc) := \frac{\II(G;\x)}{\x^{\cc}}$ for
  all $\cc \in \{1, \ldots, n\}$. We claim that $c$ is the unique
  $\cc$ with
  \begin{equation}
    \label{eq:NN_bounds}
     2^{-(nl)^{1-\eps}-1} \leq \NN(\cc) \leq 2^{(nl)^{1-\eps}+n+1}.
  \end{equation}
  Let us prove this claim. As $1\leq N\leq 2^n$ and by
  \eqref{eq:N_estimate}, we know that
  \begin{equation}
    \label{eq:fraction_bounds}
    \frac{1}{2} \leq \frac{I(G,\x)}{\x^c}\leq 2^{n+1}.
  \end{equation}
  Thus, by \eqref{eq:approx}, $\cc=c$ fulfills \eqref{eq:NN_bounds}.

  On the other hand, when $\cc\leq c-1$ we have
  \begin{equation*}
    |\NN(\cc)| \overset{\eqref{eq:approx}, \eqref{eq:fraction_bounds}}{\geq}
    2^{-(nl)^{1-\eps} -1}|\x| \overset{\eqref{eq:large_point}}{>} 2^{(nl)^{1-\eps}+n+1}.
  \end{equation*}
  When $\cc\geq c+1$ we have $|\NN(\cc)| < 2^{-(nl)^{1-\eps}-1}$ by similar arguments.
  This shows that any integer $\cc, \cc\neq c,$ does
  not fulfill \eqref{eq:NN_bounds}. Thus, $c$ can be found in
  randomized polynomial time using $\aA$.
\end{proof}

\section*{Acknowledgement}
We would like to thank Johann A.\ Makowsky for valuable comments on an
earlier version of this work and for drawing our attention to
\cite{averbouch_makowsky}.

\bibliographystyle{alpha}
\bibliography{literatur}

\end{document}